\newtheorem{theorem}{Theorem}
\title{Lagrange Oscillatory Neural Networks for Constraint Satisfaction and Optimization}
\date{}
\date{
\textsuperscript{1} Microelectronics Dept., LIRMM, University of Montpellier, CNRS, France \\
\vspace{8pt}
\textsuperscript{2} Electrical Engineering Dept., Eindhoven Technical University, The Netherlands \\
\vspace{8pt}
\textsuperscript{*} Corresponding Author: delacour@ucsb.edu \\
}
\author{
    Corentin Delacour\textsuperscript{1,*} 
\And
	Bram Haverkort\textsuperscript{2} 
\And
	Filip Sabo \textsuperscript{2} 
\And
	Nadine Azemard\textsuperscript{1} 
\And
    Aida Todri-Sanial\textsuperscript{1,2}
}
\begin{document}
\maketitle
\begin{abstract}
Physics-inspired computing paradigms are receiving renewed attention to enhance efficiency in compute-intensive tasks such as artificial intelligence and optimization. Similar to Hopfield neural networks, oscillatory neural networks (ONNs) minimize an Ising energy function that embeds the solutions of hard combinatorial optimization problems. Despite their success in solving unconstrained optimization problems, Ising machines still face challenges with \textit{constrained} problems as they can become trapped in infeasible local minima. In this paper, we introduce a Lagrange ONN (LagONN) designed to escape infeasible states based on the theory of Lagrange multipliers. Unlike existing oscillatory Ising machines, LagONN employs additional Lagrange oscillators to guide the system towards feasible states in an augmented energy landscape, settling only when constraints are met. Taking the maximum satisfiability problem with three literals as a use case (Max-3-SAT), we harness LagONN's constraint satisfaction mechanism to find optimal solutions for random SATlib instances with up to 200 variables and 860 clauses, which provides a deterministic alternative to simulated annealing for coupled oscillators. We benchmark LagONN with SAT solvers and further discuss the potential of Lagrange oscillators to address other constraints, such as phase copying, which is useful in oscillatory Ising machines with limited connectivity.
\end{abstract}
\section{Introduction}\label{sec1}
\begin{figure}[t!]
\centering
\includegraphics[width=1\textwidth]{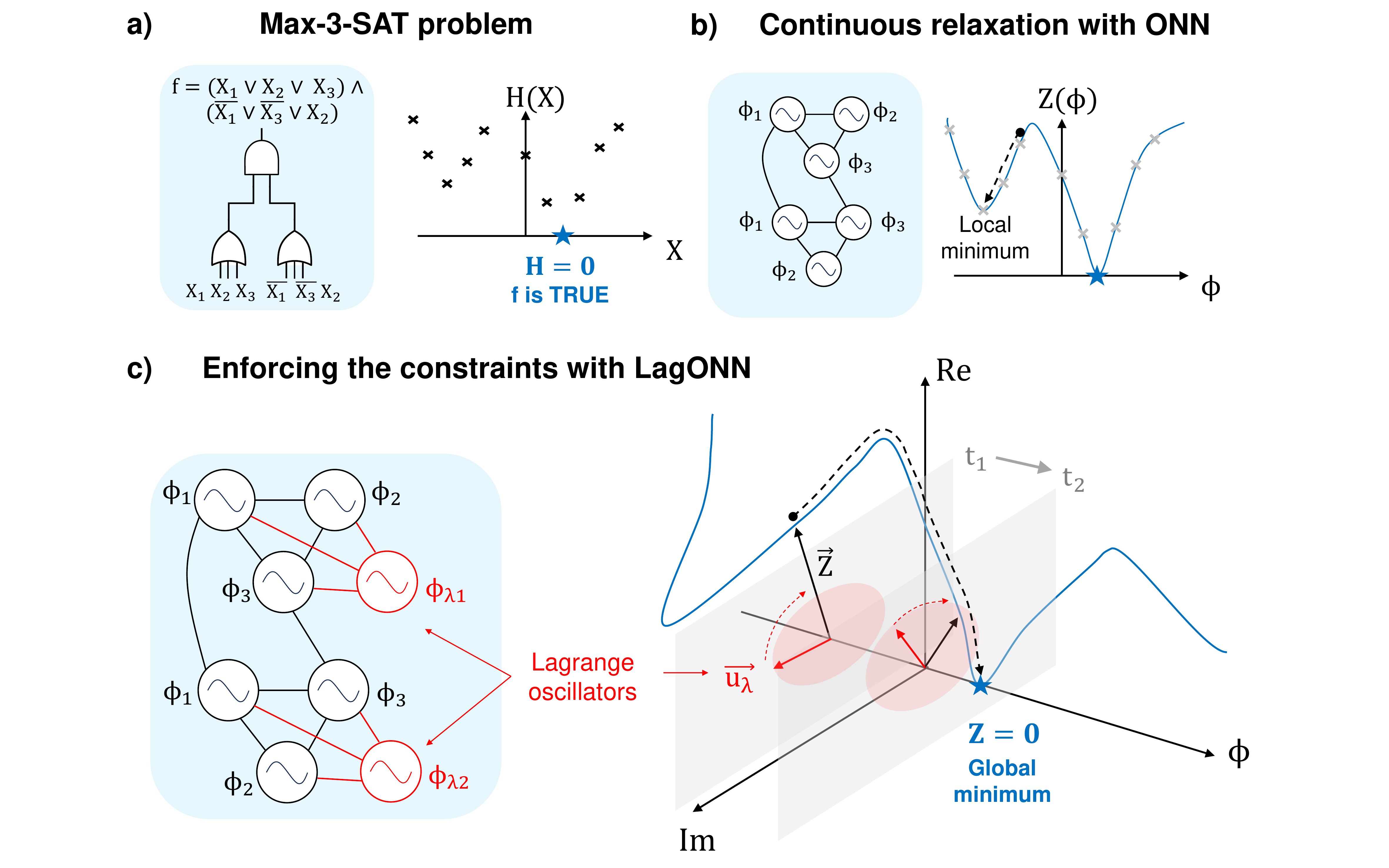}
\caption{a) Any combinatorial optimization problem can be mapped to the 3-SAT problem \cite{Cook_1971}. Its optimization version Max-3-SAT seeks an assignment $X$ satisfying most clauses and can be described as the minimization of an Ising energy function $H(X)$. b) Phase-based oscillatory neural networks (ONN) are analog solvers for combinatorial optimization. Their energy function $E(\phi)$ corresponds to a continuous relaxation of $H(X)$. ONNs can be trapped in local energy minima. c) The proposed Lagrange ONN employs additional oscillators to enforce constraint satisfaction corresponding to 3-SAT clauses. Conceptually, these forces correspond to vectors $\Vec{u_\lambda}$ that "push" an energy vector $\Vec{Z}$ along new directions to escape local minima and reach an optimal solution. }
\label{lagoon_intro}
\end{figure}

Physics-inspired approaches, such as Ising machines, are being actively explored as potential efficient hardware solutions for data-centric applications in artificial intelligence and optimization \cite{Mohseni_2022}. Ising machines are closely related to the seminal works from Hopfield \cite{Hopfiel_1982} and Tank \cite{Tank_1986} in the 1980s, who introduced analog neural networks capable of \textit{naturally} solving NP-hard combinatorial optimization problems, such as the traveling salesman problem (TSP). 
Given a list of cities and their coordinates, TSP seeks a minimum-distance tour that visits every city exactly once and returns to the starting point. Hopfield and Tank's original approach maps cities and their positions in the tour to analog neurons, and intercity distances to synaptic connections, such that the problem instance is "physically wired". The neurons then evolve in parallel to minimize an Ising-like energy that encodes the tour length.
 
Despite its elegance, the system state can become trapped in local energy minima corresponding to infeasible tours, e.g., when two cities are visited simultaneously \cite{Hopfield_1984}. Constraints can theoretically be enforced with sufficiently large penalty coefficients \cite{Lucas_2014}, but this often slows convergence, as the dominance of constraint terms in the energy function hinders the search for optimal solutions. While software implementations can enforce hard constraints by generating only feasible samples, hardware Ising machines typically implement \textit{soft} constraints embedded in the energy function. 
Consequently, current approaches to solving constrained optimization problems with Ising machines rely on tuning the penalty strength to balance feasibility and solution accuracy \cite{Venturelli_2015,parizy_2021,cellini_2024}.

With this energy function, what strategies, other than using impractically large penalty parameters, can we leverage for constraint satisfaction? Due to their continuous energy landscape, analog relaxations may provide techniques beyond those found in binary systems.
In particular, they fit well with the formalism from Lagrange for constrained optimization on continuous functions \cite{hestenes_1969,powell_1978}, which is a common principle to various physics-based solvers \cite{vadlamani_2020}. By relaxing the Ising energy to a continuous energy landscape, it is possible to apply Lagrange's theory and prevent the system state from settling to infeasible solutions, and rather force it to escape infeasible energy minima using additional variables  — Lagrange multipliers $\lambda$  \cite{Nagamatu_1996}. Conceptually, the dynamics consist of a competition between a gradient descent along the original variable direction (to minimize a cost function) and an ascent along the $\lambda$ direction (to enforce the constraints). Another interesting property of continuous variables is the parallel system evolution, contrasting with the typical sequential nature of simulated annealing \cite{Kirkpatrick_1983}, based on Gibbs sampling and only allowing sequential moves between connected spins.

In this paper, we revisit the concept of Lagrange multipliers applied to coupled phase oscillators, which we refer to as \textit{Lagrange oscillatory neural network} (LagONN), as illustrated in Fig.\ref{lagoon_intro}. This platform is motivated by recent advances in coupled oscillator systems \cite{Todri_2024}, which have demonstrated promising results in solving hard combinatorial optimization problems in the analog domain, including the Maximum cut problem \cite{Parihar_2017,Wang_2021,Graber_2022,Graber_2023,Delacour_2023}, Satisfiability (SAT) \cite{Bashar_2023_high_order}, and the Maximum independent set problem \cite{Mallick_2020}. Dense implementations of coupled oscillators using CMOS technology are already available \cite{Ahmed_2021,Moy_2022,Graber_2023,Graber_2024}, while emerging platforms such as spintronic devices \cite{Grollier_2020,Grimaldi_2023} and transition- or bistable-based devices \cite{Dutta_2021,Kim_2023,Maher_2024,Yun_2024} promise further scaling.

The primary objective of this work is to demonstrate constraint satisfaction with coupled phase oscillators without relying on quadratic energy penalties \cite{Lucas_2014}, thereby avoiding the trade-off between feasibility and solution accuracy common in constrained optimization problems.
As an initial demonstration, we select a pure satisfaction problem: the Max-3-SAT problem, whose hardware acceleration has been extensively explored in literature \cite{Sharma_2023,Su_2023, Hizzani_2024,Pedretti_2025,Dikopoulos_2025,Salim_2025}. The objective is to find a Boolean assignment of variable $x$ that maximizes the number of TRUE clauses composed of three literals $C_m=l_1^m\bigvee l_2^m \bigvee l_3^m$ in the formula:
\begin{equation}
    f_B=C_1\bigwedge C_2 \bigwedge...\bigwedge C_{M-1}\bigwedge C_{M}
    \label{boolean_formula_eq}
\end{equation}
With $l_j^m\in \{x_1,...,x_{N},\overline{x_1},...,\overline{x_N}\}$. Assessing the satisfiability of $f_B$ is NP-complete \cite{Cook_1971}, thus any problem in NP can be reduced to 3-SAT in theory. 

For Ising machines, the Max-3-SAT problem can be approached in at least two ways. The first is an unconstrained optimization problem where the goal is to find the ground state of a static Ising energy combining all the clause terms (Fig.\ref{lagoon_intro}b). The second, which is the focus of this work, is to maximize constraint satisfaction via Lagrange multipliers that dynamically enforce the clauses \cite{Nagamatu_1996}, as shown in Fig.\ref{lagoon_intro}c for LagONN.

As the Lagrange formalism provides a deterministic mechanism to escape \textit{infeasible} states, this translates into a ground state search for pure satisfaction problems like Max-3-SAT for which other classical solvers can get stuck at local minima (Fig.\ref{lagoon_intro}c). Consequently, LagONN constitutes a deterministic alternative to stochastic algorithms like simulated annealing \cite{Kirkpatrick_1983} for ground state search. This is an interesting feature for oscillatory-based Ising machines, as harnessing noise and annealing can be challenging in practice. 

The paper is organized as follows. After discussing prior work on physics-inspired solvers, we introduce a Lagrange oscillatory neural network (LagONN) that maps a 3-SAT formula to interconnected oscillatory subcircuits. The circuit dynamics seek an optimal saddle point in the energy landscape that satisfies all constraints.
Next, we benchmark LagONN against SASAT \cite{johnson_1996}, a simulated annealing algorithm, and the SAT solvers WalkSAT \cite{selman_1994} and GNSAT-N \cite{Pedretti_2025} on Max-3-SAT instances from the SATlib library up to 200 variables and 860 clauses \cite{SATlib}. Finally, we discuss the broader utility of Lagrange oscillators by exploring their potential application to other constrained optimization problems, including copying phases in hardware with limited connectivity.

\section{Background} \label{background}
\subsection{Physics-inspired solvers for combinatorial optimization}
At the heart of many physics-inspired solvers for combinatorial optimization lies the idea of mapping the problem's cost function to an energy, often set as the Ising Hamiltonian and expressed as:
\begin{equation}
    H=-\sum_{i<j}J_{ij}S_iS_j-\sum_i h_i S_i
    \label{Ising_ham_eq}
\end{equation}
where $S_i=\pm1$ are the spins, $J_{ij}$ the interaction coefficients between spins, and $h_i$ is the external field applied to spin $S_i$. Finding the ground states of $H$ is in general NP-hard \cite{Lucas_2014}. Updating a spin value $S_i$ causes an energy change $\Delta H_i$ that can drive the search in the energy landscape. While a deterministic search based only on energy minimization will likely get stuck at local minima, most state-of-the-art approaches allow some energy increase with nonzero probability to escape local minima. The standard choice for the system probability distribution is the Boltzmann law $\pi=\exp(-H/T)/Z$ that assigns high probabilities to low-energy states, where $T$ is a temperature parameter and $Z$ is the partition function. Using Markov chain Monte Carlo algorithms such as Gibbs sampling, one can then sample from the Boltzmann distribution by updating each spin sequentially, based on its local input $I_i=\sum_jJ_{ij}S_j+h_i$ \cite{Camsari_2017}. 

Since the goal of combinatorial optimization is to find low-energy states, the simulated annealing algorithm \cite{Kirkpatrick_1983} is a natural choice for hardware implementation on noisy devices such as memristor arrays \cite{Cai_2020} or coupled probabilistic bits (p-bits), which inherently generate probabilistic spin updates \cite{chowdhury_2023}. Implementing Gibbs sampling with synchronous systems typically imposes a sequential spin update, although this is not a limitation for asynchronous systems \cite{Singh_2024} or sparse hardware with high parallelism \cite{Aadit_2022}. In this paper, the system state components evolve in parallel owing to dynamics determined by differential equations. Our model is deterministic, although our numerical integration scheme introduces some errors, as discussed in the Appendix \ref{appendix_solver}.

At first glance, a deterministic search may seem unsuitable for exponentially large spaces as the system state follows a determined trajectory depending on initialization, which may take exponential time before reaching an optimal point. Yet this limitation also applies to simulated annealing, where finite cooling schedules can trap variables in suboptimal states \cite{Aarts_1989}. Other adiabatic approaches propose to slowly anneal the coupling amplitudes $|J_{ij}|$ in real-time, effectively shaping the energy landscape \cite{fahimi_2021,jiang_2023}, or to induce bifurcation phenomena during the adiabatic evolution of non-linear Hamiltonian systems, as in the simulated bifurcation algorithm \cite{Goto_2019,ansari_2024}. In our approach, the weight amplitude $|J_{ij}|$ is rather fixed, but its \textit{phase} $\theta_{ij}$ varies in real-time, which can also be seen as an adaptive energy landscape.

Finally, many deterministic solvers rely on tailored differential equations \cite{Ercsey-ravasz_2011,Molnar_2013,Traversa_2017,Goto_2019}, where attractors in the phase space correspond to solutions, or deliberately introduce chaos to mimic Boltzmann sampling \cite{Suzuki_2013,Lee_2025}. A key consideration for building a corresponding physical machine is whether the system has bounded or unbounded variables. In \cite{Ercsey-ravasz_2011} and \cite{Molnar_2018}, the authors propose an analog model that provides an exponential speed-up, but at the cost of unbounded variables and thus potentially exponential power. In practice, variable growth is capped by the finite power supply \cite{chang_2022}, which can lead to exponential computation times on hard instances \cite{Molnar_2013}.
In this work, we focus on dynamics realizable with coupled phase oscillators, ensuring bounded circuit power since variables are phases and the synaptic amplitudes $|J_{ij}|$ remain fixed. Consequently, we do not anticipate exponential speed-up, as confirmed by our time-to-solution study on Max-3-SAT.

\subsection{Computing with analog oscillatory neural networks}
This work focuses on oscillatory neural networks operating in the phase domain: assuming identical oscillator frequencies, information is encoded in the phases $\phi_i$ relative to a reference oscillator. Under this assumption, and with symmetric synaptic weights $J_{ij}=J_{ji}$, coupled oscillators minimize an Ising-like energy and are similar to analog Hopfield neural networks \cite{Izhikevich_2006,Jackson_2018,Abernot_2021}. 
With sinusoidal interactions, the network energy takes the form of a two-dimensional XY Ising Hamiltonian \cite{Wang_2019} :
\begin{equation}
    E=-\sum_{i<j}J_{ij}\cos(\phi_i-\phi_j)-\sum_i h_i \cos(\phi_i)
    \label{ONN_energy_eq}
\end{equation}
which is an analog relaxation of the Ising model (Eq. \ref{Ising_ham_eq}), since $E=H$ when the phases are binary (multiple of $\pi$).
The connection between coupled oscillators and the Ising model can be illustrated with two coupled oscillators in the absence of external fields. For a coupling $J<0$, the energy $E=-J\cos(\phi_1-\phi_2)$ is minimized when the phase difference is $\pi$ (corresponding to opposite Ising spins), whereas for $J>0$, the minimum occurs when the phases align (corresponding to identical spins). Oscillatory neural networks typically reach these energy minima via gradient descent, governed by:
\begin{equation}
    \dot{\phi}=-\nabla_\phi E
\end{equation}

While the continuous nature of phases in $E$ has been exploited in efficient Max-Cut solvers \cite{Burer_2001,Erementchouk_2022}, most studies of coupled oscillators focus on the binary Ising model, often introducing harmonic signals to binarize phases, as proposed in \cite{Wang_2021}. In this work, we do not employ an additional binarization mechanism. As we will show, the fixed points naturally tend towards binary phases as the number of constraints (3-SAT clauses) increases.

In \cite{Bashar_2023_high_order}, the authors extend the energy $E$ to hypergraphs, incorporating high-order interactions between variables, whereas the classical Ising model (Eq. \ref{Ising_ham_eq}) includes only pairwise (quadratic) spin interactions. This extension allows certain NP-hard problems, such as Satisfiability, to be mapped directly to hardware without requiring quadratization, which adds auxiliary oscillators \cite{Bybee_2023}. In this work, we focus on this approach, specifically considering simultaneous interactions among four oscillators $k, l, m, n$ whose energy can be written as:
\begin{equation}
    E_{k l m n}=J_{k l}\cos(\phi_k-\phi_l+\phi_m-\phi_n)=Re\Big(J_{k l}\exp\big[i(\phi_m-\phi_n)\big]\exp\big[i(\phi_k-\phi_l)\big]\Big)
    \label{high_order_eq}
\end{equation}
Examining Eq. \ref{high_order_eq}, the additional third- and fourth-order interactions ($m$ and $n$) can be interpreted as a complex synapse connecting $k$ and $l$ with amplitude $J_{kl}$ and synaptic phase $\theta_{kl}=\phi_m-\phi_n$, which conveys the high order information from $m$ and $n$. This phase difference can also be interpreted as a delay $\tau_{kl}=(\phi_m-\phi_n)/\omega_0$ where $\omega_0$ is the oscillating frequency. With fourth-order interactions, the three variables of a 3-SAT clause plus an additional Lagrange variable could interact simultaneously. We note that implementing such high-order interaction is nontrivial and leave it as future work, though some conceptual ideas are discussed in Section \ref{sec_arch}.

\subsection{Constrained optimization with Lagrange multipliers}

This work leverages the concept of Lagrange multipliers that are at the heart of many approaches to constrained optimization \cite{hestenes_1969,powell_1978,fisher_2004,Wah_1999}. For a cost function $f(x)$ with constraints $g(x)$, these methods typically define a Lagrange function:
\begin{equation}
    L(x,\lambda)=f(x)+\lambda^Tg(x)
    \label{lagrange_function}
\end{equation}
where the constraints are satisfied when $g(x)=0$, and $\lambda$ is a new variable called the \textit{Lagrange multiplier}. Similar to the penalty method \cite{hestenes_1969}, the Lagrange approach weights the constraints and adds them to the cost function. The minimum of $L$ with respect to $x$ then provides a lower bound for the optimal constrained solution $f(x^*)$, since $\min_x L\leq L(x^*,\lambda)=f(x^*)$.
The purpose of the Lagrange multiplier $\lambda$ is to close this gap by finding an optimal $\lambda^*$ such that $\min_x L(x,\lambda^*)=f(x^*)$, or at least to reduce the gap as much as possible, yielding the tightest lower bound on the optimal value \cite{fisher_2004}. Because $L$ is concave in $\lambda$ \cite{boyd_2023}, the optimal $\lambda^*$ can be obtained by maximizing $\min_xL$:
\begin{equation}
    L^*=\max_\lambda \min_x L(x,\lambda)
    \label{max_min_eq}
\end{equation}

For continuous functions $f$ and $g$, this optimization problem can be interpreted as a Lagrange neural network, as defined in \cite{Zhang_1992}, where neurons perform gradient descent along $x$ to minimize $L$, while Lagrange neurons perform gradient ascent in the $\lambda$-direction to reduce the gap (Eq. \ref{max_min_eq}) and reach an optimal solution $f(x^*)$ that satisfies the constraints. The corresponding dynamics are:
\begin{equation}
\begin{cases}
        \tau \dot{x}=-\nabla_x L \\
        \tau_\lambda\dot{\lambda}=+\nabla_\lambda L=g(x)
\end{cases}
        \label{LNN_dynamics}
\end{equation}
where $\tau$ and $\tau_\lambda$ control the speeds of minimization and maximization, respectively. Reaching the global optimum $f(x^*)$ requires an ideal minimizer, and in non-convex settings, gradient descent can be trapped in local minima, yielding suboptimal solutions. Nevertheless, it is important to note that the search for a constrained solution \textit{cannot} stop until the constraints are satisfied, i.e., $g(x)=0$.
Hence, for satisfaction problems such as Max-3-SAT, which we explore in this paper, the search continues until all clauses are TRUE. Another geometric interpretation of the dynamics in Eq. \ref{LNN_dynamics} is that they seek a saddle point of $L$, a minimum in the $x$-direction and a maximum in the $\lambda$-direction. In general, the energy landscape is non-convex and $L^*$ is not necessarily a saddle point \cite{boyd_2023}. For LagONN, however, we show that for satisfiable instances, $L^*$ corresponds to a saddle point that satisfies all constraints and can be reached via the competitive dynamics defined in Eq. \ref{LNN_dynamics}.

\section{Results}
\subsection{Mapping 3-SAT to oscillatory neural networks}

In this work, we build on the approach of Nagamatu et al. \cite{Nagamatu_1996} to solve the Max-3-SAT problem (Eq. \ref{boolean_formula_eq}) using a Lagrange neural network with phase-based oscillatory neurons. Before introducing Lagrange multipliers, we now describe how to map a 3-SAT clause to coupled oscillators. First, we express each clause $C_i$ as an Ising energy $H_i$ that equals 0 if and only if $C_i$ is satisfied. For 3-SAT, there are four possible clause types, which we map to Ising energies as follows:
\begin{align}
&C_1=X\vee Y\vee Z \longrightarrow H_1=1+S_XS_Y+S_XS_Z+S_YS_Z-(S_X+S_Y+S_Z)-S_XS_YS_Z \\ 
&C_2=\overline{X}\vee Y\vee Z \longrightarrow H_2=1-S_XS_Y-S_XS_Z+S_YS_Z-(-S_X+S_Y+S_Z)+S_XS_YS_Z \nonumber\\
&C_3=\overline{X}\vee \overline{Y}\vee Z \longrightarrow H_3=1+S_XS_Y-S_XS_Z-S_YS_Z-(-S_X-S_Y+S_Z)-S_XS_YS_Z\nonumber \\
&C_4=\overline{X}\vee \overline{Y}\vee \overline{Z} \longrightarrow H_4=1+S_XS_Y+S_XS_Z+S_YS_Z+(S_X+S_Y+S_Z)+S_XS_YS_Z \nonumber
\end{align}
where spins $S_j=\pm1$ are the binary Ising variables with $S_j=+1$ corresponding to TRUE. Writing the truth table for each clause, we find that $C_i$ is TRUE when $H_i=0$ and $C_i$ is FALSE when $H_i=8$.
The next step is to map these Ising energies onto phase-based oscillatory neural networks. We propose relaxing the binary Ising Hamiltonians to complex variables defined as:
\begin{figure}[t!]
\centering
\includegraphics[width=1\textwidth]{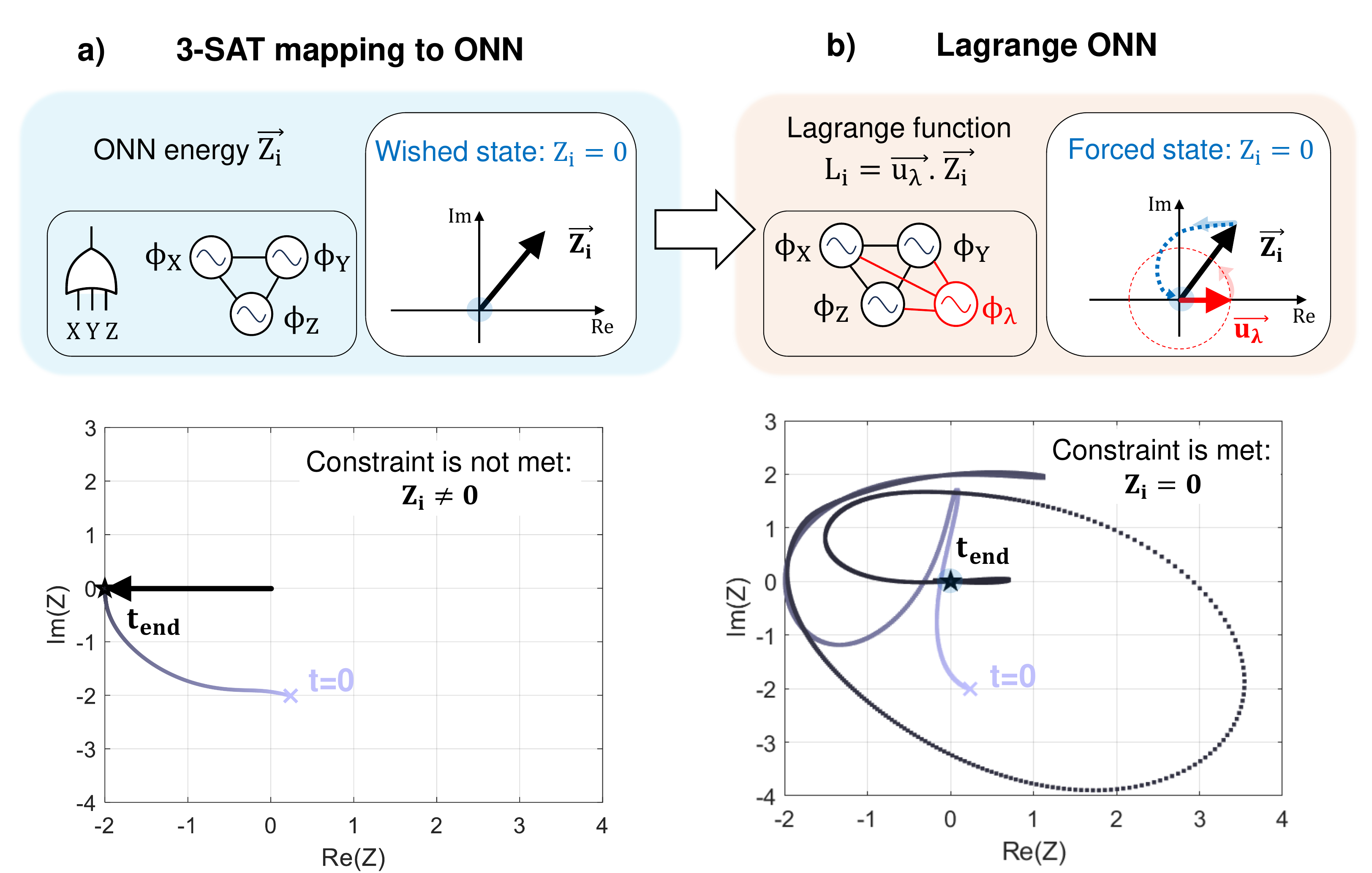}
\caption[LagONN Principle]{a) 3-SAT clause mapping to ONN. The Ising energy is relaxed to a complex quantity $Z_i$, a function of ONN phases $\phi_X,\phi_Y,\phi_Z$. For binary phases, $Z_i=0$ corresponds to the optimal Ising state and induces $C_i=$ TRUE. However, the standard ONN trajectory settles to an undesired fixed point where $Z_i\neq 0$ (bottom). b) Adding a Lagrange oscillator with phase $\phi_\lambda$ can enforce the constraint $Z_i=0$. This is achieved by defining the Lagrange function $L_i$ and setting competitive dynamics (gradient descent and ascent) seeking a saddle point of $L_i$ where $Z_i=0$. Bottom: resulting Lagrange ONN trajectory for the same phase initialization.}
\label{LagONN_principle}
\end{figure}
\begin{align}
&H_1 \longrightarrow Z_1=1+e^{i(\phi_X-\phi_Y)}+e^{i(\phi_X-\phi_Z)}+e^{i(\phi_Z-\phi_Y)}-(e^{i\phi_X}+e^{i\phi_Y}+e^{i\phi_Z})-e^{i(\phi_X-\phi_Y+\phi_Z)} \label{complex_relaxation} \\ 
&H_2 \longrightarrow Z_2=1-e^{i(\phi_X-\phi_Y)}-e^{i(\phi_X-\phi_Z)}+e^{i(\phi_Z-\phi_Y)}-(-e^{i\phi_X}+e^{i\phi_Y}+e^{i\phi_Z})+e^{i(\phi_X-\phi_Y+\phi_Z)} \nonumber \\
&H_3 \longrightarrow Z_3=1+e^{i(\phi_X-\phi_Y)}-e^{i(\phi_X-\phi_Z)}-e^{i(\phi_Z-\phi_Y)}-(-e^{i\phi_X}-e^{i\phi_Y}+e^{i\phi_Z})-e^{i(\phi_X-\phi_Y+\phi_Z)} \nonumber \\
&H_4 \longrightarrow Z_4=1+e^{i(\phi_X-\phi_Y)}+e^{i(\phi_X-\phi_Z)}+e^{i(\phi_Z-\phi_Y)}+(e^{i\phi_X}+e^{i\phi_Y}+e^{i\phi_Z})+e^{i(\phi_X-\phi_Y+\phi_Z)} \nonumber
\end{align}

We deliberately introduce phase differences, e.g. $\phi_X-\phi_Y$, to define a complex relaxation $Z_i$ of the conventional coupled-oscillators energy function (Eq. \ref{ONN_energy_eq}). By construction, the complex relaxation $Z_i$ equals the Hamiltonian $H_i$ for binary phases $\phi_j=\pi(1-S_j)/2$.
However, non-binary phases can exist such that $Z_i=0$, making the assignment of phases to Ising spins ambiguous. This issue is not caused by the high-order interaction terms but is inherent to the two-dimensional nature of the XY Ising model. A similar situation arises when solving the Max-cut problem with coupled oscillator systems without forcing binarization \cite{Wang_2021,Delacour_2023}, raising the question of how to round phases to spins.

In this work, we do not face this issue, as for more than a few clauses, the phase fixed points tend to be binary due to an overdetermined system of $2M$ equations and $N$ phase variables, where $M>N$ and $M$ is the number of 3-SAT clauses.
Leveraging this property, we design a coupled-oscillator module that minimizes $|Z_i|$, so that when $Z_i=0$, the phases $\phi_X$, $\phi_Y$, and $\phi_Z$ directly yield the Boolean values $S_X$, $S_Y$, and $S_Z$ satisfying clause $C_i$.

\subsection{Enforcing constraints with an oscillatory-based Lagrange multiplier}

Constraining the oscillators to satisfy a clause $C_i$ with energy $Z_i$ can be expressed via the Lagrange function $L_i(\phi,\lambda)$ as:
\begin{equation}
L_i(\phi,\lambda)=\lambda_{R} \operatorname{Re}[Z_i]+\lambda_{I} \operatorname{Im}[Z_i]
\label{ONN_Lagrange}
\end{equation}
where $\lambda_R$ and $\lambda_I$ are the Lagrange multipliers for the constraints, satisfied when $\nabla_\lambda L_i=(\operatorname{Re}[Z_i], \, \operatorname{Im}[Z_i])=0 \iff Z_i=0$. Note that there are only two constraints and no cost function $f$, in contrast to the general Lagrange multiplier formulation (Eq. \ref{lagrange_function}). This defines a pure constraint satisfaction problem with constraint function $g=(\operatorname{Re}[Z_i], \operatorname{Im}[Z_i])=0$.

There are two possible interpretations for the Lagrange multipliers, which give rise to two distinct types of Lagrange oscillatory neural networks:
\begin{enumerate}
    \item $\lambda_R$ and $\lambda_I$ are \textit{synaptic elements}. This raises the question of how to implement synapses $\lambda_R$ and $\lambda_I$ that evolve in real-time while having a limited range in a physical system.
    \item $\lambda_R$ and $\lambda_I$ are \textit{oscillatory variables}, i.e. phase oscillators. In this interpretation, only oscillating neurons and synapses with fixed amplitude are involved, potentially simplifying hardware implementation.
\end{enumerate}
Focusing on the second interpretation, a specific choice for $\lambda$ simplifies the Lagrange function $L_i$ (Eq. \ref{ONN_Lagrange}). We consider a Lagrange \textit{oscillator} with the same frequency as the other neurons, phase $\phi_\lambda$, and unit amplitude (Fig. \ref{LagONN_principle}b). In the 2D plane, its corresponding unit vector $\Vec{u_\lambda}$ has coordinates $(\cos\phi_\lambda, \, \sin \phi_\lambda)$, which we assign to multipliers $(\lambda_R, \, \lambda_I)$. Consequently, the Lagrange function reduces to the dot product between $\Vec{u_\lambda}$ and the vector $\Vec{Z_i}=(\operatorname{Re}[Z_i],\,\operatorname{Im}[Z_i])$ as:
\begin{equation}
    L_i(\phi,\phi_\lambda)=\Vec{u_\lambda} . \Vec{Z_i}=\cos\phi_\lambda \operatorname{Re}[Z_i]+\sin\phi_\lambda \operatorname{Im}[Z_i]
    \label{LagONN_vec_equation}
\end{equation}
This yields a compact form for $L_i$, expressed here for the first type of clause ($i=1$):
\begin{equation}
\begin{aligned}
    L_1(\phi,\phi_\lambda)=&\cos\phi_\lambda\\
    -&\cos(\phi_X-\phi_\lambda)-\cos(\phi_Y-\phi_\lambda)-\cos(\phi_Z-\phi_\lambda)\\
    +&\cos(\phi_X-\phi_Y-\phi_\lambda)+\cos(\phi_X-\phi_Z-\phi_\lambda)+\cos(\phi_Z-\phi_Y-\phi_\lambda)\\
    -&\cos(\phi_X-\phi_Y+\phi_Z-\phi_\lambda)
\end{aligned}
\label{Lyapunov_LagONN}
\end{equation}
We identify the energy function as that of four sinusoidal coupled oscillators $\phi_X, \,\phi_Y, \,\phi_Z, \,\phi_\lambda$ with high-order interactions up to fourth order, as indicated by the last term and previously introduced in Eq. \ref{high_order_eq}. Notably, the synaptic amplitudes appear as fixed binary weights $\pm1$ for the cosine terms, ensuring that the LagONN variables remain bounded, since only phases vary and are bounded by definition.

Before deriving the corresponding circuit, we now analyze the properties of the energy landscape $L_i$. Introducing an oscillatory Lagrange multiplier generates a saddle point in the energy landscape where the constraint $Z_i=0$ is satisfied, as formalized in the next theorem.

\begin{theorem} \label{theorem1}
Let $L_i(\phi,\phi_\lambda)=\Vec{u_\lambda}.\Vec{Z_i}$ then: \

\begin{enumerate}
    \item 
    $L_i\text{ has a at least one saddle point } L_i(\phi^*,\phi_\lambda^*)=0$ such that $L_i(\phi^*,\phi_\lambda) \leq L_i(\phi^*,\phi_\lambda^*) \leq L_i(\phi,\phi_\lambda^*)$.
   \item Such saddle point satisfies the constraint $Z_i(\phi^*)=0$.
\end{enumerate}
\end{theorem}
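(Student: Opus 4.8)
The plan is to build the saddle point explicitly from a Boolean assignment that satisfies the clause, check the two inequalities separately, and read off the second assertion from the $\phi_\lambda$-side inequality.

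\textbf{Setting up the point.} Every $3$-literal clause $C_i$ is satisfied by some $S^*\in\{\pm1\}^3$; set $\phi^*_j=\pi(1-S^*_j)/2\in\{0,\pi\}$. Since $Z_i$ was constructed so that $Z_i=H_i$ on binary phases, $Z_i(\phi^*)=H_i(S^*)=0$, hence $\vec Z_i(\phi^*)=0$. Writing $L_i(\phi,\phi_\lambda)=\vec u_\lambda\cdot\vec Z_i(\phi)=\operatorname{Re}\big[e^{-i\phi_\lambda}Z_i(\phi)\big]$, this gives $L_i(\phi^*,\phi_\lambda)=0$ for \emph{every} $\phi_\lambda$, so the left inequality $L_i(\phi^*,\phi_\lambda)\le L_i(\phi^*,\phi_\lambda^*)=0$ holds (with equality) for any choice of $\phi_\lambda^*$. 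Part~2 then follows immediately: if $(\phi^*,\phi_\lambda^*)$ is any point with $L_i(\phi^*,\phi_\lambda)\le L_i(\phi^*,\phi_\lambda^*)=0$ for all $\phi_\lambda$, then $|Z_i(\phi^*)|=\max_{\phi_\lambda}\operatorname{Re}\big[e^{-i\phi_\lambda}Z_i(\phi^*)\big]=\max_{\phi_\lambda}L_i(\phi^*,\phi_\lambda)\le 0$, which forces $Z_i(\phi^*)=0$.

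\textbf{The $\phi$-side inequality.} It remains to pick $\phi_\lambda^*$ so that $\phi^*$ minimizes $\phi\mapsto L_i(\phi,\phi_\lambda^*)=\operatorname{Re}\big[e^{-i\phi_\lambda^*}Z_i(\phi)\big]$ with minimum value $0$. Here I would use that $Z_i$ is a trigonometric polynomial $\sum_k c_k e^{i m_k\cdot\phi}$ with $c_k=\pm1$ and $m_k\in\mathbb Z^3$: at a binary $\phi^*$ every $e^{i m_k\cdot\phi^*}=\pm1$, so $\nabla_\phi Z_i(\phi^*)=i\sum_k c_k m_k e^{i m_k\cdot\phi^*}$ is purely imaginary. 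Taking $\phi_\lambda^*=0$ therefore makes $\nabla_\phi\operatorname{Re}[Z_i](\phi^*)=\operatorname{Re}[\nabla_\phi Z_i(\phi^*)]=0$, i.e. $\phi^*$ is a critical point of $\operatorname{Re}[Z_i]$ with value $0$, and $\phi_\lambda^*$ is then a (degenerate) maximizer of $L_i(\phi^*,\cdot)$. To promote this critical point to a minimum I would exploit the product form $H_i=\prod_j\bigl(1-S_j^{(i)}\bigr)$, where $S_j^{(i)}$ is the signed value of the $j$-th literal: substituting the corresponding complex factors into $Z_i$ and regrouping, I would try to bound $\operatorname{Re}\big[e^{-i\phi_\lambda^*}Z_i\big]$ from below near $\phi^*$ by a sum of products of nonnegative terms of the form $1-\cos(\cdot)$ that vanishes exactly at $\phi^*$, which yields the local minimality and closes the argument.

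\textbf{The main obstacle.} The delicate step is exactly this lower bound: at a binary satisfying $\phi^*$ the Hessian of $\operatorname{Re}[Z_i]$ is indefinite for some choices of $S^*$ and identically zero for others, so a plain second-order argument does not suffice, and one must either select the satisfying assignment and the reference phase $\phi_\lambda^*$ carefully, or move to a suitable non-binary zero of $Z_i$ and analyze the leading nonvanishing homogeneous part of $L_i$ there. It should also be noted that the second inequality cannot be meant for all $\phi$: for these $Z_i$ the image $\{Z_i(\phi):\phi\in\mathbb R^3\}$ surrounds the origin (it already contains real values of both signs together with values of nonzero imaginary part), so no fixed $\phi_\lambda^*$ makes $L_i(\cdot,\phi_\lambda^*)$ globally nonnegative; the saddle inequalities must be read locally in $\phi$, while the $\phi_\lambda$-side holds globally because $L_i(\phi^*,\cdot)\equiv 0$. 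Making the local-minimum step rigorous — via the factored lower bound, a degree/Morse count on the map $\vec Z_i$, or a convergence analysis of the competitive dynamics of Eq.~\ref{LNN_dynamics} toward $\phi^*$ — is where the real work lies.
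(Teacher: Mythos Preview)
Your approach differs from the paper's: instead of building the point explicitly, the paper argues via duality, asserting $\max_{\phi_\lambda}\min_\phi L_i = \min_\phi\max_{\phi_\lambda} L_i = 0$ and then invoking the standard equivalence between strong duality and the existence of a global saddle; Part~2 is handled by a short contradiction. Both routes rest on the same core fact---that $Z_i(\phi^*)=0$ at a satisfying binary $\phi^*$---which, as you note, immediately gives the $\phi_\lambda$-side inequality (with equality) and hence Part~2.

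The obstacle you isolate is genuine and not an artifact of your route. The paper itself records (discussion of Fig.~\ref{LagONN_principle}a) that the unconstrained ONN drives $Z_i$ to $-2$; together with $Z_i=8$ at the falsifying assignment and the conjugation symmetry $Z_i(-\phi)=\overline{Z_i(\phi)}$, the range of $Z_i$ lies in no closed half-plane through the origin, so $\min_\phi L_i(\phi,\phi_\lambda^*)<0$ for \emph{every} $\phi_\lambda^*$ and the right-hand saddle inequality cannot hold for all $\phi$. The paper's proof skates over exactly this point: its step ``for any $\phi$ we can find $\phi_\lambda$ orthogonal to $\vec Z_m$, hence $\max_{\phi_\lambda}\min_\phi L_T=0$'' is a non sequitur, since exhibiting for each $\phi$ some $\phi_\lambda$ with $L_T=0$ says nothing about $\min_\phi$ at a \emph{fixed} $\phi_\lambda$. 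Your Hessian remark is also on target---at the binary satisfying points the Hessian of $\operatorname{Re}[Z_i]$ is indefinite (e.g.\ eigenvalues $2\pm 2\sqrt{2},\,0$ at $(0,\pi,\pi)$ for $C_1$)---so even a local reading of the $\phi$-side inequality fails there. What survives unambiguously is that $(\phi^*,\phi_\lambda^*)$ is a critical point of $L_i$ with value $0$ and that Part~2 follows from the $\phi_\lambda$-side alone; the missing $\phi$-side argument is not supplied by the paper either.
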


\begin{figure}[t!]
\centering
\includegraphics[width=\textwidth]{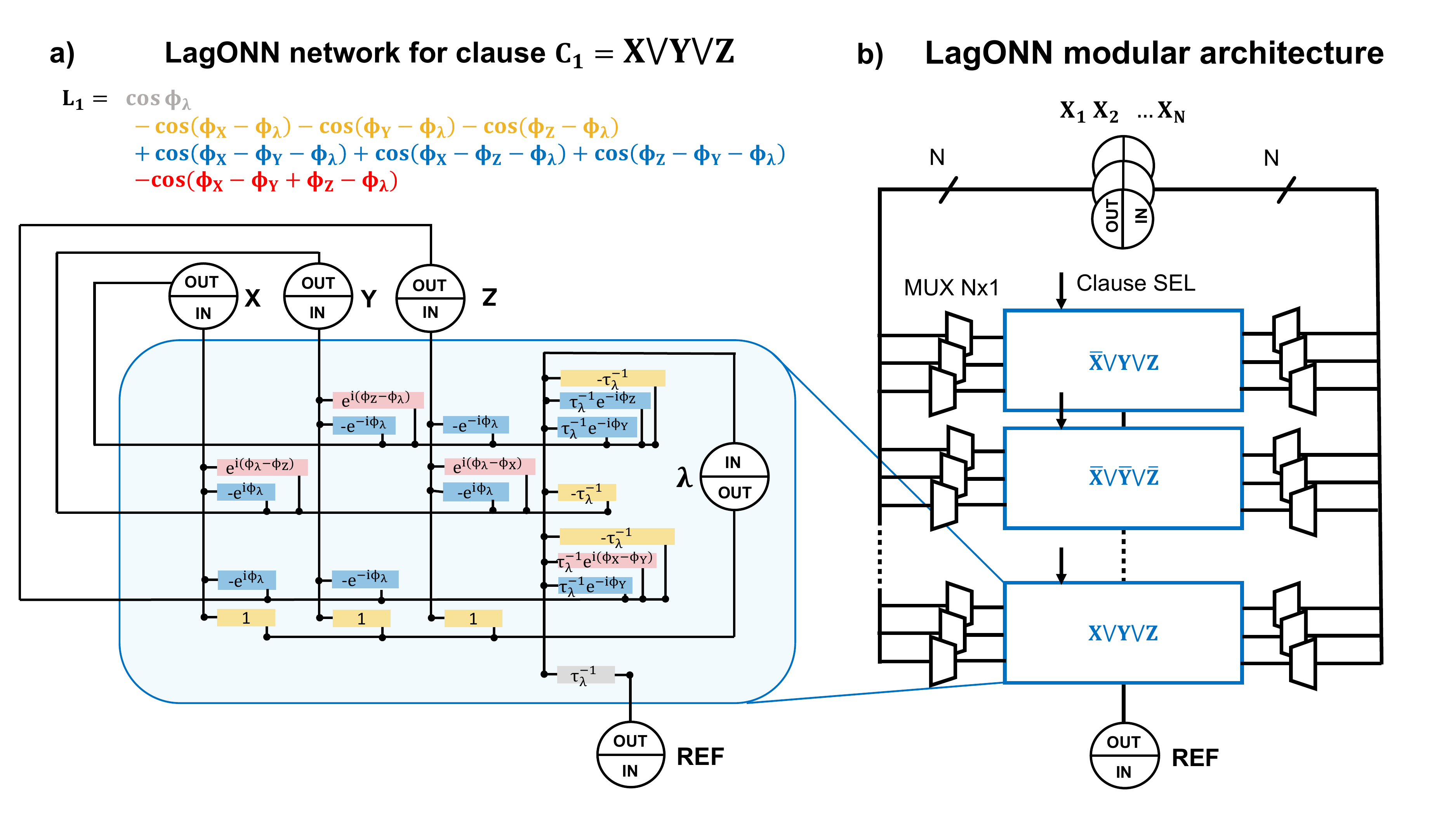}
\caption[LagONN Network]{a) LagONN network for solving clause $C_1$. Rectangles denote complex synapses in the form $J_{ij}\exp{(i\theta_{ij}})$ where $\theta_{ij}$ is implemented with a delay $\theta_{ij}/\omega_0$ in practice. The connections for $\theta_{ij}$ are not shown. b) Modular LagONN architecture to program any 3-SAT formula $f_B$ with $N$ variables and $M$ clauses. Each rectangle corresponds to a clause subcircuit and contains a Lagrange oscillator. The clause selection signal programs the clause subcircuit to one of the four possible clauses. $N\times1$ multiplexers route oscillator input and output signals to each clause according to the desired Boolean formula $f_B$.}
\label{LagONN_network}
\end{figure}

The theorem is proven in Appendix \ref{appendix_dynamics}. Since LagONN's energy landscape $L_i(\phi,\phi_\lambda)$ has at least one saddle point satisfying the constraint, we set the phase dynamics for $\phi$ such that it minimizes $L_i(\phi,\phi_\lambda)$, while the Lagrange oscillator with phase $\phi_\lambda$ maximizes $L_i(\phi,\phi_\lambda)$. This aims to reach a saddle point where $Z_i(\phi^*)=0$, as shown in Fig.\ref{LagONN_principle}b. Accordingly, we propose the following phase dynamics for clause $C_i$:
\begin{equation}
    \begin{cases}
        \tau \dot{\phi_j}=-\nabla_{\phi_j} L_i(\phi,\phi_\lambda)=-\Vec{u_\lambda}.\partial \Vec{Z_i}/\partial \phi_j \\
        \tau_\lambda\dot{\phi_\lambda}=+\nabla_{\phi_\lambda} L_i(\phi,\phi_\lambda)=\Vec{u_\lambda'}.\Vec{Z_i}
    \end{cases}
    \label{chap3_LagONN_dynamics}
\end{equation}
where $\tau$ and $\tau_\lambda$ are the time constants for the standard and Lagrange oscillators, and $\Vec{u_\lambda'}=(-\sin\phi_\lambda,\cos\phi_\lambda)$. The time constants define the relative speed between gradient descent and ascent, which we take as $\tau=\tau_\lambda$ to achieve faster convergence in simulations (see Appendix \ref{appendix_lagrange_speed}).
The dynamics can be interpreted as a \textit{competition} between $\Vec{Z_i}$ and $\Vec{u_\lambda}$, with the desired outcome being $Z_i=0$ ($L_i$'s optimal saddle point) as a compromise between the two competing dynamics.

Fig.\ref{LagONN_principle} shows examples of oscillatory neural network dynamics without and with a Lagrange oscillator. Without the Lagrange oscillator, the phases evolve to minimize $\operatorname{Re}[Z_i]$ and settle to an undesired fixed point where $Z_i=-2$ (Fig.\ref{LagONN_principle}a). However, the desired state is $Z_i=0$ (Eq. \ref{complex_relaxation}), which would satisfy the clause $C_i$ for binary phases. Introducing a Lagrange oscillator as defined in Eq. \ref{Lyapunov_LagONN}, with competitive dynamics, constrains the phases to reach a saddle point satisfying $Z_i=0$. Fig.\ref{LagONN_principle}b shows an example of LagONN for the same initialization, highlighting the complex dynamics arising from this competition. Ultimately, the system settles to an optimal saddle point where $Z_i=0$. Note that the Lagrange function (Eq. \ref{LagONN_vec_equation}) is not a Lyapunov function for the system, as it can increase over time (see Appendix \ref{appendix_dynamics}).

\subsection{Modular LagONN architecture for 3-SAT formula}\label{sec_arch}
\begin{table}[b!]
\caption{Comparison between two possible LagONN architectures.}
\centering
\begin{tabular}{|c|c|c|}
\hline
LagONN architecture                & Fully-connected & Modular               \\ \hline
Oscillators                        & $N+M$         & $N+M$                  \\
Size of synaptic array             & $(N+M)^2$       & $4^2$/module (clause) \\
Distance of high-order interaction & $N+M$ (global)  & $4$ (local)           \\
$N\times 1$ multiplexers             & 0               & $6M$                 \\ \hline
\end{tabular}
\label{chap3_tab:LagONN_arch}.
\end{table}
We now propose a circuit implementation for a LagONN clause $C_i$ with energy $L_i$ and the competitive dynamics previously introduced (Eq. \ref{chap3_LagONN_dynamics}). A network implementing a single clause as in Eq. \ref{chap3_LagONN_dynamics} for $C_1=X\bigvee Y\bigvee Z$ is shown in Fig.\ref{LagONN_network}a.
The network includes a reference oscillator to measure phases and apply an external field to the Lagrange oscillator, so that in practice $\phi_j$ corresponds to $\phi_j-\phi_{REF}$. The main source of LagONN complexity is the synaptic array, which consists of delayed and weighted signals. In Fig.\ref{LagONN_network}a, a synapse $S_{ij}$ connecting oscillators $i$ and $j$ with weight $W_{ij}$ and phase $\theta_{ij}$ is represented as $S_{ij}=W_{ij}e^{i\theta_{ij}}$. This supposes having a mechanism to delay the synaptic input by $\theta_{ij}/\omega_0$ in real-time, which is the consequence of the fourth-order interaction terms in the LagONN function (Eq. \ref{Lyapunov_LagONN}).

 We believe this scheme is compatible with mixed-signal ONNs \cite{Moy_2022,Delacour_2023}, which could use digital synchronization circuits such as latches or counters to propagate the phase information from the third and fourth oscillators. Another approach would be to modulate the synaptic current amplitude between two oscillators in real-time with the phases of two others. Such circuitry would likely employ analog amplifiers, compatible with a variety of analog oscillators, including spintronic-based \cite{Grollier_2020} or relaxation oscillators \cite{Dutta_2021,Corti_2021,Delacour_2023}.
 
We now introduce the circuit for a larger Boolean formula $f_B$ with $N$ Boolean variables and $M$ clauses $C_m=l_1^m\bigvee l_2^m \bigvee l_3^m$ defined as:
\begin{equation}
    f_B=C_1\bigwedge C_2 \bigwedge...\bigwedge C_{M-1}\bigwedge C_{M}
\end{equation}
with $l_j^m\in \{x_1,...,x_{N},\overline{x_1},...,\overline{x_N}\}$.
The 3-SAT instance can be mapped to LagONN modules where a module $m$ corresponds to a clause $C_m$, and each literal $l_j^m$ corresponds to an input port of that module (Fig.\ref{LagONN_network}b). Each module implements the network shown in Fig.\ref{LagONN_network}a and is highlighted by the blue box. Implementing the AND operation "$\bigwedge$" between two clauses is unnecessary, as each Lagrange oscillator evolves to satisfy its respective clause. Consequently, the entire network maximizes the number of TRUE clauses in $f_B$.

LagONN modules are connected as follows. If two clauses $m$ and $n$ share a literal at positions $k$ and $l$: either identical $l_k^m=l_l^n$ or negated $l_k^m=\overline{l_l^n}$ corresponding to variable $\phi_{x_j}$, $j\in\{1,...,N\}$, we connect input ports $k$ and $l$ so that the synapses from both clauses influence $\phi_{x_j}$.
Repeating this procedure for every pair of clauses ensures that each variable $\phi_{x_j}$ is influenced by all corresponding clauses, yielding the total LagONN function:
\begin{equation}
    L_T(\phi_x,\phi_\lambda)=\sum_{m=1}^M\Vec{u_\lambda ^m}.\Vec{Z_m}(\phi_x)
    \label{lagoon_Lagrange_function}
\end{equation}
where $\phi_x=(\phi_{x_1},\phi_{x_2},...,\phi_{x_{N-1}},\phi_{x_N})^T$ is the vector of phases corresponding to the Boolean variables 
and $\phi_\lambda=(\phi_{\lambda_1},\phi_{\lambda_2},...,\phi_{\lambda_{M-1}},\phi_{\lambda_M})^T$ contains all the Lagrange variables.

The proposed modular architecture offers two main advantages. First, it avoids the need for large synaptic arrays, as synapses are confined within each module. Second, it preserves high-order synaptic interactions locally within the modules.
In contrast, a programmable fully-connected design with $N+M$ oscillators would require a $(N+M)^2$ synaptic array to support any $f_B$ instance. Propagating high-order interactions across such a large array is particularly challenging, as it is not straightforward with a standard two-dimensional grid layout.
However, the modular approach introduces a different type of two-dimensional array: since each module input/output can be connected to $N$ different input/output lines, a programmable architecture would require $6M$ $N\times 1$ multiplexers. Thus, compared to a fully connected design, there is a trade-off between the overhead from multiplexers and the complexity of propagating high-order interactions. The comparison between the modular and fully connected LagONN architectures scaling is summarized in Table \ref{chap3_tab:LagONN_arch}.

\begin{figure}[t!]
\centering
\includegraphics[width=0.8\textwidth]{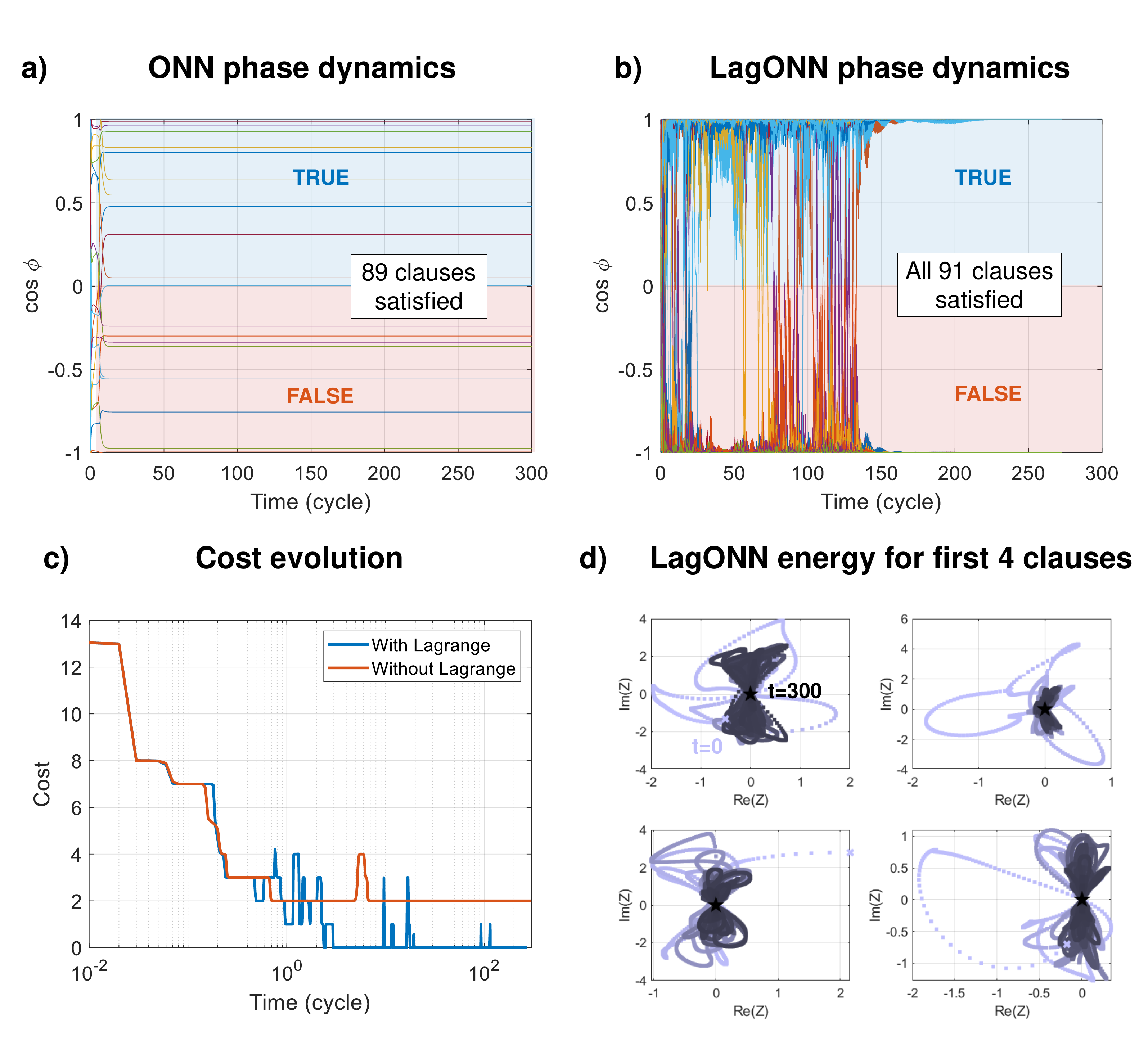}
\caption{Phase dynamics for the satisfiable SATlib instance 'cnf-20-01' with $N=20$ variables and $M=91$ clauses. a) Standard ONN dynamics quickly converge toward a sub-optimal solution where 89/91 of clauses are satisfied. b) With the same initialization, the Lagrange version takes more time to reach a fixed point. Reading out the phases gives an optimal Boolean assignment where all clauses are satisfied. c) Cost function comparison between the two approaches. LagONN finds an assignment of optimal phase around the same time the standard ONN settles ($\approx$ 10 oscillation cycles). By measuring the cost in real time, we can stop the run when a target cost is reached without waiting for convergence. d) Dynamics for four LagONN energy terms $Z_m$ corresponding to the first four clauses. While the dynamics almost seem chaotic, they evolve to reach a target saddle point where all $Z_m=0$. Ultimately, the dynamics converge toward a fixed point at t=300 oscillation cycles where all $Z_m=0$.}
\label{LagONN_sim}
\end{figure}

\subsection{LagONN competitive dynamics}

We now express the phase dynamics of the entire LagONN circuit shown in Fig. \ref{LagONN_network}b. Assuming that the formula $f_B$ is satisfiable, we extend Theorem \ref{theorem1} to the total Lagrange function $L_T$ with $M$ clauses.
\begin{restatable}{theorem}{originaltheorem} \label{theorem2}
Let $L_T(\phi,\phi_\lambda)=\sum_m^M\Vec{u_\lambda^m}.\Vec{Z_m}$ and the formula $f_B=C_1\bigwedge C_2 \bigwedge...\bigwedge C_{M-1}\bigwedge C_{M}$ is satisfiable, then: \

\begin{enumerate}
    \item 
    $L_T\text{ has a at least one saddle point } L_T(\phi^*,\phi_\lambda^*)=0$ such that $L_T(\phi^*,\phi_\lambda) \leq L_T(\phi^*,\phi_\lambda^*) \leq L_T(\phi,\phi_\lambda^*)$.
   \item Such saddle point satisfies the constraints $Z_m(\phi^*)=0$ for all clauses.
\end{enumerate}
\end{restatable}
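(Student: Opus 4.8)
The plan is to deduce Theorem~\ref{theorem2} from Theorem~\ref{theorem1} together with the satisfiability hypothesis. Two facts do the work. First, a satisfying Boolean assignment of $f_B$ produces a phase configuration $\phi^*$ at which \emph{every} $\Vec{Z_m}$ vanishes simultaneously, which forces $L_T(\phi^*,\cdot)$ to be identically $0$. Second, Theorem~\ref{theorem1} applied to a single clause supplies, for each clause type, a unit direction along which the relaxed clause energy stays non-negative for \emph{all} phase values; assigning these directions to the Lagrange oscillators and summing over the $M$ clauses makes $L_T(\cdot,\phi_\lambda^*)$ non-negative. Pinning both bounds at the common value $0$ yields the saddle point of Part~1, and the constraint-satisfaction statement of Part~2 then follows from the left saddle inequality alone.

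First I would fix the location $\phi^*$. Since $f_B$ is satisfiable, take a satisfying assignment $S^*$ and set $\phi_j^*=\pi(1-S_j^*)/2$; these phases are binary, so by construction of the complex relaxation (Eq.~\ref{complex_relaxation}) one has $Z_m(\phi^*)=H_m(S^*)$ for every clause, and $H_m(S^*)=0$ because $C_m$ is satisfied. Hence $\Vec{Z_m}(\phi^*)=\Vec{0}$ for all $m$, so $L_T(\phi^*,\phi_\lambda)=\sum_{m}\Vec{u_\lambda^m}\cdot\Vec{Z_m}(\phi^*)=0$ for \emph{every} $\phi_\lambda$; in particular the left saddle inequality $L_T(\phi^*,\phi_\lambda)\le L_T(\phi^*,\phi_\lambda^*)$ holds with equality no matter how $\phi_\lambda^*$ is chosen. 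Next I would fix $\phi_\lambda^*$. Theorem~\ref{theorem1} applied to each of the four clause types gives a saddle point of value $0$, and its right inequality states that there is a unit vector---depending only on the clause type---along which the relaxed clause energy is non-negative for \emph{all} values of its literal phases. For each $C_m$ I would choose $\phi_{\lambda_m}^*$ so that $\Vec{u_{\lambda_m}^*}$ equals this vector for the type of $C_m$. Then for every $\phi=\phi_x$, each summand of $L_T(\phi,\phi_\lambda^*)=\sum_m\Vec{u_{\lambda_m}^*}\cdot\Vec{Z_m}(\phi)$ is non-negative (the phases absent from $C_m$ do not enter $\Vec{Z_m}$), so $L_T(\phi,\phi_\lambda^*)\ge 0=L_T(\phi^*,\phi_\lambda^*)$. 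Combined with the previous equality, $(\phi^*,\phi_\lambda^*)$ satisfies $L_T(\phi^*,\phi_\lambda)\le L_T(\phi^*,\phi_\lambda^*)=0\le L_T(\phi,\phi_\lambda^*)$, which is Part~1. For Part~2, let $(\phi^*,\phi_\lambda^*)$ be any saddle point as in Part~1, so $L_T(\phi^*,\phi_\lambda^*)=0$ and $L_T(\phi^*,\phi_\lambda)\le 0$ for all $\phi_\lambda$. The map $\phi_\lambda\mapsto L_T(\phi^*,\phi_\lambda)=\sum_m\Vec{u_\lambda^m}\cdot\Vec{Z_m}(\phi^*)$ is separable in the independent Lagrange phases, and since $\Vec{u_\lambda^m}=(\cos\phi_{\lambda_m},\sin\phi_{\lambda_m})$ ranges over the unit circle, $\max_{\phi_{\lambda_m}}\Vec{u_\lambda^m}\cdot\Vec{Z_m}(\phi^*)=\lvert Z_m(\phi^*)\rvert$; hence $\max_{\phi_\lambda}L_T(\phi^*,\phi_\lambda)=\sum_m\lvert Z_m(\phi^*)\rvert\le 0$. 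As every term is non-negative, $\lvert Z_m(\phi^*)\rvert=0$, i.e. $Z_m(\phi^*)=0$, for all clauses, which is Part~2.

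The step I expect to require the most care is the transfer in the second paragraph: Theorem~\ref{theorem1} is stated for one clause with its own three oscillators, while in $L_T$ the clauses share oscillators. What makes it go through---and what must be argued explicitly---is that the half-plane inequality from Theorem~\ref{theorem1} is \emph{uniform over all phase configurations}, so it is unaffected by the coupling, and that negating a literal amounts to a $\pi$-shift of its phase while permuting literals amounts to relabelling $X,Y,Z$; thus the image of $\phi\mapsto\Vec{Z_m}(\phi)$, and hence the existence of a valid $\phi_{\lambda_m}^*$, is determined entirely by the clause types already covered by Theorem~\ref{theorem1}. I would also note that $L_T$ is not globally convex in $\phi$ and concave in $\phi_\lambda$, so $(\phi^*,\phi_\lambda^*)$ is the \emph{constructed} saddle of Part~1 rather than automatically the global $\max_{\phi_\lambda}\min_{\phi}L_T$; for satisfiable instances, however, the argument above shows the max--min value equals $0$ and is attained at $(\phi^*,\phi_\lambda^*)$, consistent with Eq.~\ref{max_min_eq} and with the subsequent claim that the competitive dynamics of Eq.~\ref{chap3_LagONN_dynamics} settle at such a point.
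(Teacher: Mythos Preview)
Your proof is correct, and it takes a genuinely different route from the paper for Part~1. The paper argues via duality: it introduces $D_T(\phi_\lambda)=\min_\phi L_T$ and $P_T(\phi)=\max_{\phi_\lambda}L_T$, shows $\max_{\phi_\lambda}D_T=\min_\phi P_T=0$ using the satisfying assignment, and then invokes the fact that equality of max--min and min--max on a compact domain yields a saddle point. You instead \emph{construct} the saddle directly: $\phi^*$ comes from the satisfying assignment (giving $L_T(\phi^*,\cdot)\equiv 0$), and $\phi_\lambda^*$ comes from the right inequality of Theorem~\ref{theorem1}, applied clause by clause, which furnishes for each clause type a fixed unit direction keeping $\Vec{u_\lambda}\cdot\Vec{Z_m}$ non-negative uniformly in $\phi$. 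Your observation that negation is a $\pi$-shift and permutation a relabelling is exactly what guarantees this direction depends only on the clause type and not on which shared variables appear in $C_m$, so the per-clause inequalities survive the coupling and sum to $L_T(\cdot,\phi_\lambda^*)\ge 0$. The payoff of your approach is an explicit $\phi_\lambda^*$ and a transparent reduction to Theorem~\ref{theorem1}; the paper's duality argument is self-contained but less constructive, and its step ``$\max_{\phi_\lambda}\min_\phi L_T=0$'' is stated more loosely than your version. For Part~2 the two proofs coincide in substance: both exploit that $\max_{\phi_\lambda}L_T(\phi^*,\phi_\lambda)=\sum_m|Z_m(\phi^*)|$, and the left saddle inequality forces every term to vanish; you phrase it as a direct computation, the paper as a short contradiction.
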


The proof is in the Appendix \ref{appendix_dynamics}.
As for a single clause, we set the dynamics of the entire LagONN system to reach a saddle point as:
\begin{equation}
    \begin{cases}
        \tau \dot{\phi_x}=-\nabla_{\phi_x} L_T=-\sum_{m} \Vec{u_\lambda^m}.\partial \Vec{Z_m}/\partial \phi_x \\
        \tau_\lambda\dot{\phi_\lambda}=+\nabla_{\phi_\lambda} L_T=\sum_{m}\Vec{u_\lambda'^m}.\Vec{Z_m}
    \end{cases}
    \label{LagONN_dynamics}
\end{equation}
Fig.\ref{LagONN_sim}a and b show examples of phase dynamics without and with Lagrange oscillators for $N=20$ variables and $M=91$ clauses (satisfiable instance cnf-20-01 from SATlib \cite{SATlib}). Without Lagrange oscillators, the ONN phases converge to non-binary values. To extract Boolean assignments, we round each phase to the nearest multiple of $\pi$, which yields a sub-optimal solution with two unsatisfied clauses. With Lagrange oscillators, the dynamics become more complex and take longer to settle. However, phases settle to multiples of $\pi$, eliminating the need for rounding, and providing an optimal Boolean assignment where all 91 clauses are satisfied. Since each clause enforces two constraints, $\text{Re}[Z]=0$, and $\text{Im}[Z]=0$, the system is overdetermined with $2M$ equations and only $N$ unknowns ($M>N$). We hypothesize that this high level of frustration restricts the saddle points to binary phases only. 

Fig.\ref{LagONN_sim}d shows the dynamics of the first four energy terms $Z_m$, which are simultaneously attracted to the origin $Z_m=0$ corresponding to a target saddle point. An optimal assignment is obtained when all $Z_m$ trajectories converge to zero.
Fig.\ref{LagONN_sim}c compares the cost function evolution for the two cases. Interestingly, LagONN dynamics reduce the cost as quickly as the ONN, reaching about two unsatisfied clauses after a single oscillation cycle. However, unlike the ONN that gets trapped, LagONN continues to evolve and eventually finds an optimal Boolean assignment. Comparing the phase dynamics from Fig.\ref{LagONN_sim}b with the cost evolution in Fig.\ref{LagONN_sim}c reveals that some phases keep evolving with little effect on the cost.
For this reason, in all the simulations reported in this paper, we monitor the cost in real-time and stop the simulation once the cost reaches a satisfactory value (set to 0 for satisfiable instances), as described in the Appendix \ref{appendix_cost_function}. In other words, we do not require full convergence to the saddle point shown in Fig.\ref{LagONN_sim}b, since its stability under the dynamics of Eq. \ref{LagONN_dynamics} is not guaranteed (see Appendix \ref{appendix_stability}).

\begin{figure}[t!]
\centering
\includegraphics[width=\textwidth]{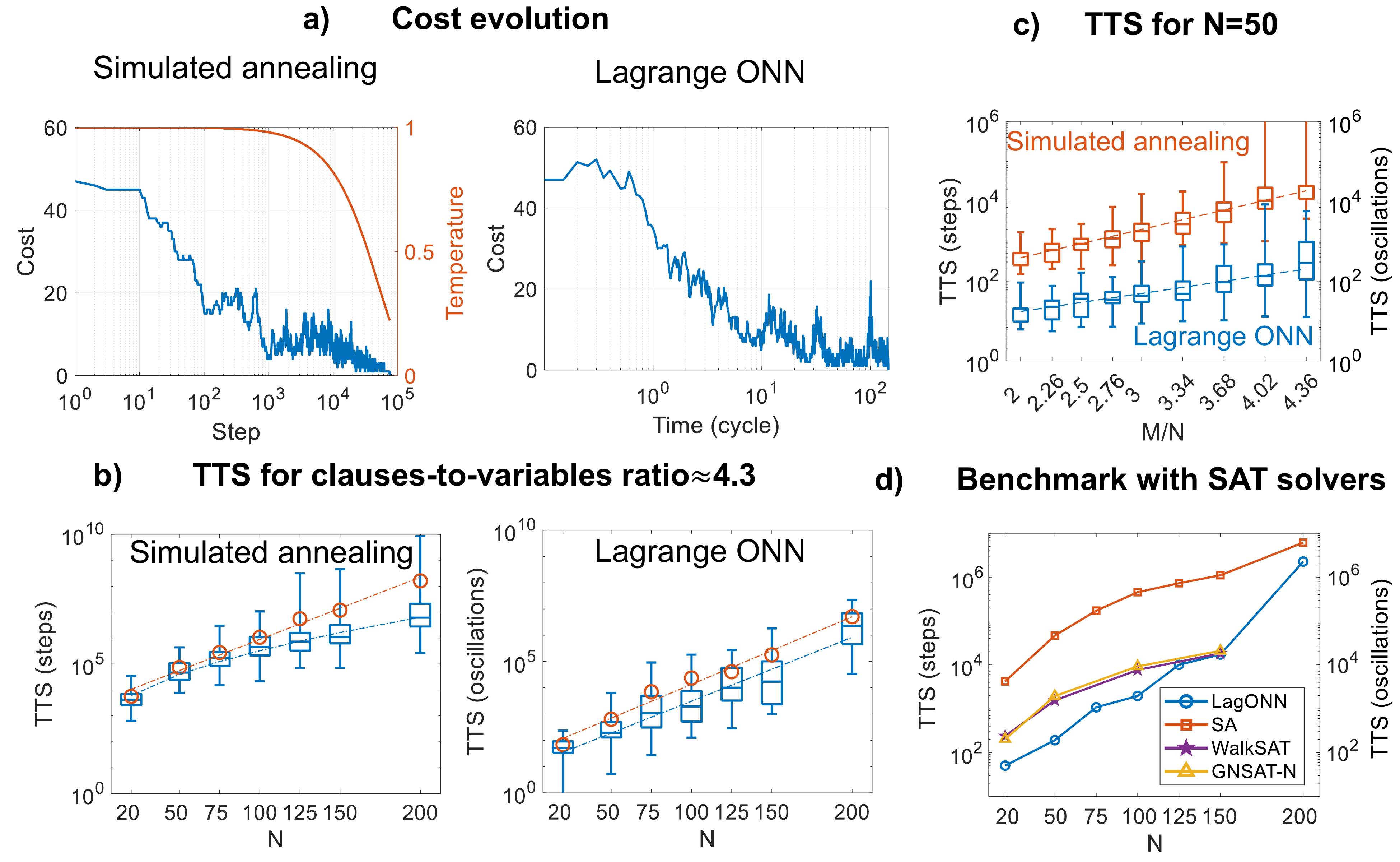}
\caption{a) Cost evolution comparison for a SATlib instance with 100 variables and 430 clauses, where all variables are initialized to '1'. The temperature in simulated annealing decreases exponentially with the number of steps as described in the Appendix \ref{appendix_SA}. b) Estimation of the time to reach an optimal solution (TTS) with 99\% probability for SATlib instances with a clauses-to-variables ratio $M/N\approx 4.3$. Boxes show the 1st, 2nd, and 3rd quartiles computed for the first 30 satisfiable instances from SATlib for LagONN, and 100 instances for simulated annealing. The error bars show the min-max values, and red circles are the averages. 
For both methods, we fit the logarithm of data to estimate the mean TTS scaling as $\sim \exp(aN+b)$ with $a=0.059$, $b=3.58$ for LagONN and $a=0.055$, $b=8.16$ for simulated annealing (red dashed lines). Simulated annealing has an advantageous scaling for the median TTS fitted as $\sim \exp(c\sqrt{N}+d)$ with $c=0.719$, $d=5.49$, whereas we fit LagONN's median TTS as  $\sim \exp(aN+b)$ with $a=0.056$ and $b=2.43$ (blue dashed curves). c) For a fixed number of variables $N=50$, we vary the number of clauses $M$ from 100 to 218 and compute TTS for 100 instances per point. The median TTS scales exponentially with the number of clauses for both methods. d) Benchmark of median TTS with SAT solvers from Ref. \cite{Pedretti_2025} based on stochastic local search for satisfiable SATlib instances.}
\label{results_scaling}
\end{figure}

\subsection{Comparison with simulated annealing and time-to-solution}
We compare LagONN's search with a simulated annealing algorithm for SAT \cite{johnson_1996} detailed in the Appendix \ref{appendix_SA}. Each SAT variable is sequentially flipped and the corresponding cost change $\delta$ is calculated. A flip is accepted with probability $1/(1+\exp(\delta/T))$ where $T$ is a temperature parameter. The temperature decreases exponentially from $T_{max}=1$ to $T_{min}=0.01$, values chosen empirically by inspecting cost trajectories across different instance sizes. Following the exponential schedule proposed in \cite{johnson_1996}, we obtain satisfactory results, though we do not claim optimality.

An example of simulated annealing run is shown in Fig.\ref{results_scaling}a (left) for 100 variables and 430 clauses, with the x-axis indicating the number of steps (tentative flips). The 100 variables are arbitrarily initialized to '1', and after the first 100 steps (one sweep), each variable has been considered once and potentially flipped according to the sigmoid probability at $T=T_{max}$. This initial sweep induces a sharp cost drop of about 30 satisfied clauses. As the temperature decreases, the algorithm enters a regime with fewer flips and the cost gradually declines. The run terminates once an optimal variable assignment is found.

A LagONN simulation is shown in Fig.\ref{results_scaling}a (right) for the same variable initialization and a random Lagrange oscillator initialization. The cost trajectory differs qualitatively: unlike simulated annealing, LagONN has no annealing schedule and continues making uphill moves even at low cost, which simulated annealing would rarely accept at low temperature. Crucially, LagONN dynamics do not halt until all constraints are satisfied, whereas simulated annealing can freeze into a suboptimal state within finite annealing time. However, this does not imply greater efficiency, as shown in the next time-to-solution comparison.

We now compare LagONN and simulated annealing on hard random instances from SATlib \cite{SATlib} with increasing sizes $(N,M)\in\{(20,91),(50,218),(75,325),(100,430),(125,538),(150,645),(200,860)\}$.
These instances are part of a well-known benchmark, lying near the computational phase transition where the clause-to-variable ratio is $M/N\approx 4.3$ \cite{Mitchell_1992}, making them especially challenging. For each $(N,M)$, we evaluate the first 100 instances for simulated annealing and the first 30 for LagONN (fewer instances due to the longer runtime of LagONN simulations). Each instance is run with 100 random initializations, where the phases are drawn uniformly, and simulations are executed for a fixed runtime $t_{max}$. From these runs, we estimate the success probability $p_s$. As in other Ising-machine benchmarks, we report the time to solution (TTS), defined as the expected time to reach an optimal solution with probability 0.99. For each instance, the TTS is given by:
\begin{equation}
    TTS=t_{max}\frac{\log(0.01)}{\log(1-p_s)}
    \label{eq:timetosolution}
\end{equation}
We apply the same TTS formula for simulated annealing, where $t_{max}=n_{max}$ corresponds to the number of algorithmic steps per run (maximum number of variable updates). When analyzing TTS scaling with system size, it is crucial to optimize the runtime for each size. Otherwise, misleading effects may appear such as artifically flat curves when $t_{max}$ is set too large, or TTS overestimation when $t_{max}$ is too small for larger instances \cite{ronnow_2014}.
To mitigate these issues, we iteratively increase $t_{max}$ and $n_{max}$ with the system size, up to $t_{max}\leq 10^6$ oscillation cycles, and $n_{max}\leq 2\times 10^7$ steps. This ensures that the success probability remains within the range $0.1<p_s<0.9$ for all tested instances.

Fig.\ref{results_scaling}b shows the TTS scaling with the number of variables at a fixed clauses-to-variables ratio $M/N\approx4.3$, plotted on a semilog scale with blue boxes corresponding to 1st, 2nd, and 3rd quartiles. Qualitatively, simulated annealing exhibits more favorable scaling than LagONN on this benchmark, as its slope flattens at larger sizes. Specifically, we fit the median TTS as $\sim \exp(c\sqrt{N}+d)$ with $c=0.719$, $d=5.49$, whereas LagONN's median TTS follows a straight line in the semilog plot, fitted as $\sim \exp(aN+b)$ with $a=0.056$ and $b=2.43$ (blue dashed curves). For simulated annealing, we excluded a polynomial scaling for the median TTS, as the log-log plot did not yield a straight line.

At large sizes, LagONN's TTS has fewer outliers than simulated annealing, with a mean TTS scaling similarly to the median as $\sim \exp(aN+b)$ with $a=0.059$, $b=3.58$. In contrast, simulated annealing sometimes encounters very hard instances, leading to more than five decades of variations at $N=200$, which pushes the mean scaling to $\sim \exp(aN+b)$ with $a=0.055$, $b=8.16$, comparable in slope to LagONN. Overall, while simulated annealing achieves a more favorable median scaling, LagONN remains competitive due to its lower prefactor $\exp(2.43)<\exp(5.49)$, which keeps it competitive up to $N\leq200$.

As shown in Fig.~\ref{results_scaling}c for $N=50$, both algorithms exhibit an exponential increase in TTS as the clauses-to-variables ratio approaches the computational phase transition at $M/N \approx 4.3$. Between $M/N = 2$ and $4.3$, the TTS increases by roughly two orders of magnitude for simulated annealing and by about one order of magnitude for LagONN, confirming that problem difficulty grows with the number of clauses.
Fixing $M/N = 2$ for all system sizes halves LagONN’s median scaling slope in Fig.~\ref{results_scaling}b, yielding a new slope of $a = 0.027$ and a median TTS of $4.4 \times 10^3$ oscillations for $N = 200$.
We remain cautious about these scaling estimates, as they are based on a limited number of instances and sizes, constrained by the very long simulation times at $M/N\approx 4.3$ (see Appendix \ref{appendix_solver}). Running the same TTS scaling experiment at $M/N\approx4.3$ with a reduced integration time step (linearly annealed from 0.15 to 0.015) did not significantly improve LagONN's exponential scaling.

\subsection{Benchmark with SAT solvers}

We benchmark LagONN with SAT solvers based on stochastic local search (SLS), namely WalkSAT \cite{selman_1994,Hoos_2000} and a hardware-enhanced version of GSAT, GNSAT-N \cite{Pedretti_2025}. SLS algorithms flip variables iteratively, similar to simulated annealing, but differ in how they select variables. One of the simplest methods is GSAT \cite{selman_1992}, a greedy algorithm that flips a variable which maximizes the overall cost reduction or gain. Ref. \cite{Pedretti_2025} proposes GNSAT-N, a GSAT hardware acceleration and improvement by adding normal-distributed noise to the gain, thereby allowing random walks.
WalkSAT has another strategy to select variables and focuses on unsatisfied clauses. It first selects an unsatisfied clause and computes the number of new unsatisfied clauses (called breaks) induced when flipping each variable of the clause. If for some variable of the clause, break=0 (zero-damage flip \cite{Hoos_2000}), it is flipped. Otherwise, the best variable (minimizing breaks) is flipped with some probability, or a random variable is flipped (random walk). It is then a compromise between greedy moves and a random walk, set by the probability parameter.

Fig.\ref{results_scaling}d shows the median TTS (steps) for WalkSAT and GNSAT-N reported by Ref. \cite{Pedretti_2025} on uniform satisfiable SATlib instances, alongside LagONN's TTS expressed in number of oscillations. WalkSAT and GNSAT-N have similar TTS and exhibit a clear advantage in scaling over simulated annealing and LagONN. We interpret the performance gap between SAT solvers and simulated annealing as follows. Although the probability of flipping a variable in simulated annealing depends on the cost reduction (following a sigmoidal function), its current implementation (SASAT \cite{johnson_1996}) evaluates each variable iteratively in a fixed order. This contrasts with GSAT or WalkSAT, which have more sophisticated mechanisms to select variables or clauses before making a flip, potentially making more efficient moves.

Although LagONN also has a greedy mechanism to follow steepest trajectories minimizing the cost (gradient descent), its dynamics are strongly perturbed by the competitive forces introduced by the Lagrange oscillators, as shown in Fig.\ref{LagONN_principle}. This interplay of simultaneous gradient descent and ascent appears less efficient than the highly tuned SLS searches designed for SAT, which is reflected in LagONN’s less favorable TTS scaling. 
Nevertheless, its lower prefactor keeps LagONN competitive for $N\leq 150$, and the results overall confirm its ability to enforce constraint satisfaction by escaping infeasible states in a deterministic manner. More generally, these findings suggest that similar competitive oscillator dynamics could be applied to constrained optimization problems formulated through a Lagrange function (Eq. \ref{lagrange_function}) as we discuss next with the example of phase copying. In this case, Lagrange oscillators are used to enforce the constraints during optimization rather than serving purely as a search mechanism, as in Max-3-SAT.

\section{Summary and Discussion}
\begin{table}[t!]
\centering
\caption{Comparison between simulated annealing and Lagrange ONN for solving the Max-3-SAT constraint satisfaction problem, from both conceptual and practical (hardware) points of view.}
\label{table_discussion}
\resizebox{0.75\textwidth}{!}{%
\begin{tabular}{|c|c|c|}
\hline
&
  Simulated annealing &
  Lagrange ONN \\ \hline
Strengths &
  \begin{tabular}[c]{@{}c@{}}Advantageous TTS scaling.\\ Convergence at cold temperature.\end{tabular} &
  \begin{tabular}[c]{@{}c@{}}Noiseless approach.\\ Cannot settle into infeasible states.\end{tabular} \\ \hline
Weaknesses &
  \begin{tabular}[c]{@{}c@{}}Requires careful noise tuning.\\ Can be stuck at infeasible states.\end{tabular} &
  \begin{tabular}[c]{@{}c@{}}Seemingly worse TTS scaling.\\ Stability is not guaranteed.\end{tabular} \\ \hline
\end{tabular}%
}
\end{table}
In this work, we demonstrated how introducing additional Lagrange oscillators into coupled oscillator systems enables constraint satisfaction, exemplified by the Max-3-SAT problem. These problems represent special cases where the Lagrange function reduces to the constraint function $g$, which counts unsatisfied clauses, without any additional cost term $f$. Because the Lagrange oscillators drive the system until all constraints are satisfied, LagONN continues its search until it finds an optimal solution, a distinct advantage over simulated annealing, which can become trapped in local minima. Another benefit is that LagONN is noiseless and requires no annealing schedule, a practical advantage given that tuning noise in physical oscillator implementations can be challenging. On the other hand, LagONN’s search for optimal Boolean assignments is less effective than simulated annealing and state-of-the-art SAT solvers, as indicated by our time-to-solution analysis. Furthermore, the stability of the saddle points reached by the dynamics in Eq. \ref{LagONN_dynamics} is not guaranteed, although stabilization mechanisms are possible (see Appendix \ref{appendix_stability}). A side-by-side comparison of the strengths and limitations of each method is provided in Table \ref{table_discussion}.

Beyond pure constraint satisfaction problems, LagONN can also be applied to more general constrained problems involving both a cost $f$ and constraints $g$. For instance, Fig.\ref{other_constrained_problems} illustrates the problem of phase copying, which can arise in large-scale hardware implementations. When a problem is represented by a dense graph that cannot be directly implemented due to the quadratic number of edges, one can introduce copy nodes that copy each other to distribute the edges \cite{sajeeb_2025}. This scenario is typical for quantum annealers with limited connectivity \cite{pelofske_2025}. Here, the constraint $g$ enforces equality between copies, while the cost function $f$ corresponds to the Ising energy of the original graph. To copy the spin values, one can introduce a ferromagnetic coupling $J_c$ between nodes, corresponding to the penalty term $\frac{J_c}{2}(S_1-S_2)^2 \equiv -J_cS_1S_2 $. In principle, this positive coupling should be large enough to satisfy the constraint. However, in practice, too strong couplings can rigidify the dynamics and hinder the search for ground states \cite{Venturelli_2015,sajeeb_2025}. 

Fig.\ref{other_constrained_problems}a shows an example of a coupled oscillator network with standard coupling $J=-1$ (blue edges) and copy coupling $J_c$ (dashed black edges). The cost function $f$ to minimize is the XY Ising energy $E$ for phase oscillators (Eq. \ref{ONN_energy_eq}), determined by the symmetric couplings. The copy constraints are defined as $g=(\cos\phi_1-\cos\phi_2,\cos\phi_3-\cos\phi_4)=0$. When the constraints are satisfied, the graph reduces to a 4-node fully-connected graph. Fig.\ref{other_constrained_problems}b shows an example of phase dynamics with a weak copy coupling $J_c=+0.5$ where the constraints are not fully enforced, e.g. $\cos\phi_1\neq \cos\phi_2$.

\begin{figure}[t!]
\centering
\includegraphics[width=\textwidth]{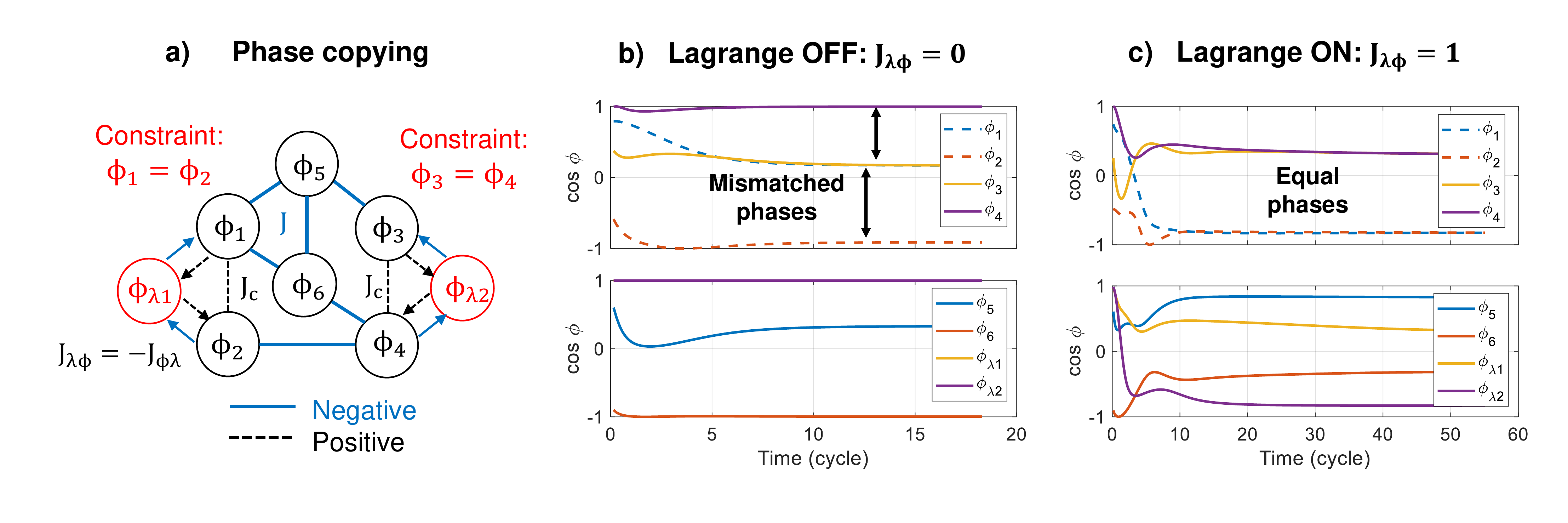}
\caption{LagONN for phase copying on an initial graph with negative coupling $J=-1$ (blue edge). a) In practice at a large scale, Ising machines often have limited hardware connectivity, and dense graphs are mapped to a sparser network by introducing copy nodes \cite{sajeeb_2025}. Ideally, the copy nodes should copy the original values, e.g. $\cos\phi_2=\cos\phi_1$. This is generally achieved with a ferromagnetic coupling between the two nodes ($J_c=+0.5$ here). In theory, the positive coupling should be large enough to keep the two variables identical. However, a large coupling value introduces rigidity in the system, hindering the ground state search. b) For weak coupling values such as in this example, the copied phases are not equal ($\cos\phi_1\neq \cos\phi_2$). c) Lagrange oscillators can enforce the copy constraints between two copied nodes, as shown in this example for $J_{\lambda \phi}=1$.}
\label{other_constrained_problems}
\end{figure}

Instead of increasing $J_c$ at the risk of rigidifying the system, one can introduce a Lagrange oscillator for each constraint to enforce phase equality, as illustrated in Fig.\ref{other_constrained_problems}c. For example, for the phases $\phi_1$ and $\phi_2$, the corresponding copy constraint can be written as a vector $\Vec{Z_{1-2}}=\exp(i\phi_1)-\exp(i\phi_2)=0$.
Analogous to the Lagrange function defined for 3-SAT (Eq. \ref{lagoon_Lagrange_function}), the Lagrange term for this constraint is expressed as:
\begin{equation}
     \Vec{u_{\lambda1}}.\Vec{Z_{1-2}}=\cos(\phi_1-\phi_{\lambda1})-\cos(\phi_2-\phi_{\lambda1})
\end{equation}
where $\phi_{\lambda1}$ is the phase of the Lagrange oscillator connected to the first pair of oscillators. The total Lagrange function for this example is then:
\begin{equation}
    L(\phi,\phi_\lambda)=E(\phi)+J_{\lambda \phi}\big(\Vec{u_{\lambda1}}.\Vec{Z_{1-2}}+\Vec{u_{\lambda2}}.\Vec{Z_{3-4}}\big)
\end{equation}

where $J_{\lambda \phi}$ sets the constraint strength or equivalently, the speed of Lagrange oscillators. Implementing competitive dynamics seeking a saddle point in the Lagrange landscape as in Eq. \ref{LagONN_dynamics}, $\phi_\lambda$ performs gradient ascent until the constraints are satisfied, while other phases $\phi$ follow gradient descent to minimize $L$. The opposite gradient signs induce asymmetric couplings between oscillators and Lagrange oscillators, illustrated by the arrows in Fig.\ref{other_constrained_problems}a with weights set to $J_{\lambda \phi}=-J_{\phi \lambda}=1$.

A key limitation of this approach is that the stability of fixed points is not guaranteed and requires further analysis. For example, in simulations, increasing the strength $J_{\lambda \phi}$ speeds up the Lagrange oscillators and introduces transient and decaying phase oscillations. In contrast, increasing the copying strength $J_c$ acts like additional damping and smooths the dynamics. Combining energy penalties and Lagrange multipliers could enhance existing oscillatory Ising machines, as proposed in previous work for knapsack problems \cite{delacour_2025}, potentially yielding orders-of-magnitude speed-ups. A detailed exploration of these regimes is left for future work.

\section{Conclusion}
This article introduced LagONN, a Lagrange oscillatory neural network that enforces constraint satisfaction as demonstrated for the Max-3-SAT problem. Unlike gradient-descent-based approaches such as oscillatory Ising machines, which can be trapped in infeasible local minima, LagONN employs additional Lagrange oscillators to ensure that 3-SAT clauses are satisfied. Conceptually, these Lagrange variables provide alternative pathways in the energy landscape to escape local minima and reach optimal states where phases correspond to optimal Boolean values. When benchmarked on SATlib instances up to 200 variables and 860 clauses against simulated annealing and SAT solvers, LagONN simulations exhibited less favorable time-to-solution scaling with increasing problem size, yet remained competitive for the tested problem sizes ($N\leq 200$) due to a smaller prefactor in the scaling function. For the Max-3-SAT constraint satisfaction problem, LagONN offers a deterministic search method that eliminates the need for careful noise control required in simulated annealing, which can be difficult to implement in hardware. Furthermore, the example of phase copying illustrates how oscillatory-based Ising machines augmented with Lagrange oscillators can enforce constraint satisfaction in general optimization tasks, going beyond traditional penalty methods.

\section*{Declarations}

This work was supported by the European Union’s Horizon 2020 research
and innovation program, EU H2020 NEURONN (www.neuronn.eu) project
under Grant 871501.
BM, FS, and ATS acknowledge support from the European Union's Horizon Europe research and innovation programme, PHASTRAC project under grant agreement No 101092096 and Dutch Research Council‘s AiNed Fellowship research programme, AI-on-ONN project under grant agreement No. NGF.1607.22.016, as well as funding from the European Research Council ERC THERMODON project under grant agreement No. 101125031.

\section*{Code availability}

Matlab codes and data are available at the following GitHub repository: 

https://github.com/corentindelacour/Lagrange-oscillatory-neural-network







\bibliographystyle{unsrt}  
\bibliography{bib}
\newpage
\appendix

\section{LagONN cost function} \label{appendix_cost_function}
Here, we describe how LagONN's cost is monitored during the simulation. From the SATlib .cnf files \cite{SATlib}, we build a system of differential equations for each instance (Eq. \ref{LagONN_dynamics}). As there is no straightforward Lyapunov function for the system, we monitor the number of unsatisfied clauses in real-time using a custom cost function $\kappa (\phi)$ defined as:
\begin{equation}
\begin{aligned}
    f_B&=C_1\bigwedge C_2 \bigwedge...\bigwedge C_{M-1}\bigwedge C_{M}\nonumber\\ \longrightarrow \kappa(\phi)&=K_1+K_2+...+K_{M-1}+K_M
    \end{aligned} \label{chap3_cost_function}
\end{equation}
where $K_m(\phi)=l_1^m\,l_2^m\,l_3^m$ with literals $l_j^m=0.5\big(1\pm\tanh(\beta\cos\phi_j^m)\big)\in[0,1]$. The sign that weights the tanh term depends on whether the literal $l_j^m$ corresponds to a positive $x_j$ (-) or negated variable $\overline{x_j}$ (+). This way, $K_m(\phi)=0$ if there is a variable assignment that satisfies the clause $C_m$. Consequently, $f_B$ is true if  $\kappa(\phi)=0$. The tanh function is used to map phases to Ising spin values and is equivalent to rounding phases to the nearest multiples of $\pi$. For a sufficiently high $\beta$-value, we then have $\kappa(\phi)=N_{unsat}$, i.e. $\kappa(\phi)$ counts the number of unsatisfied clauses. Note that with the proposed rounding procedure, a clause $C_m$ is true if and only if $K_m(\phi)<0.5^3=0.125$ ($K_m(\phi)=0.125$ when $\forall j\; \cos\phi_j^m=0)$. Thus, if $\kappa(\phi)<0.125$, $f_B$ is true and we use this value as a threshold to stop the LagONN search as shown in Fig.\ref{cost_example}a. for $N=100$ variables and $M=430$ clauses.

Fig.\ref{cost_example}b and c show the energy values $Z_m$ for each clause at the initialization (purple cross) and at the snapshot time t=150 oscillations (black star) when LagONN finds an optimal solution. For the nominal ONN, the final fixed point does not satisfy $Z_m=0$ for many clauses. For LagONN, since we stop the simulation before convergence, most of the final $Z_m$ values are zero yet.
In practice, one could have a standard Boolean circuit corresponding to the formula $f_B$ (with AND and OR gates) checking in real-time the number of satisfied clauses and sampling the phases when the cost reaches the target value $\kappa(\phi)<0.125$.

\begin{figure}[t!]
\centering
\includegraphics[width=1\textwidth]{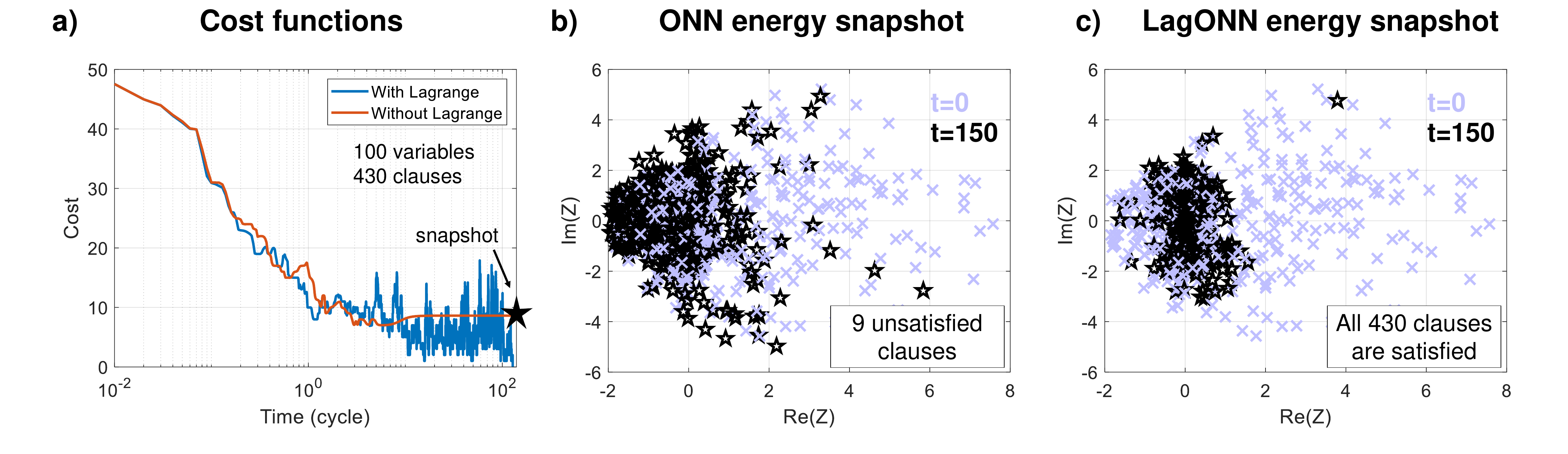}
\caption{Simulation for the satisfiable SATlib instance 'cnf-100-01' with $N=100$ variables and $M=430$ clauses. Here, we monitor the cost function and stop the simulation when the cost<0.125 or when the system reaches a fixed point. a) Cost function comparison between the standard ONN and the Lagrange version. While the two systems produce a rapid cost decrease in about 1 oscillation cycle (with more than 35 satisfied clauses), Lagrange oscillators are then actively exploring the phase space when the standard ONN gets stuck into a local minimum. When the Lagrange ONN finds an optimal phase assignment at t=150 oscillation cycles, we take a snapshot of all energy terms $Z_m$. b) Standard ONN energy snapshot. The black stars show the energy values at the snapshot time. Most of them are not settling to the target $Z_m=0$. c) LagONN energy snapshot. When the simulation is stopped (cost<0.125), $Z_m$ values are getting closer to the origin. By monitoring the cost in real-time, we do not need to wait for full convergence towards an optimal saddle point where all $Z_m=0$.}
\label{cost_example}
\end{figure}

\section{ODE solver for LagONN's state equations} \label{appendix_solver}
\begin{figure}[t!]
\centering
\includegraphics[width=0.8\textwidth]{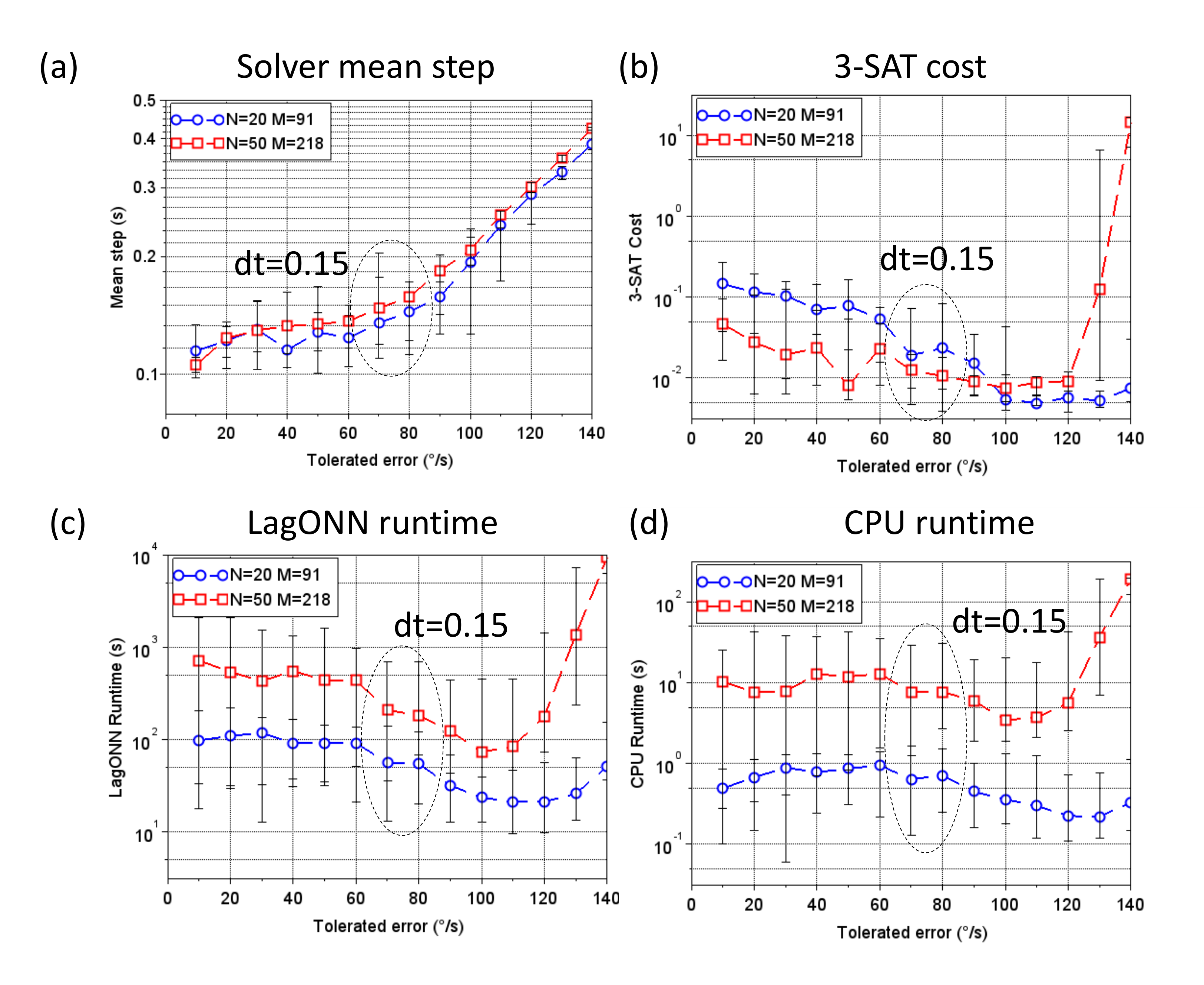}
\caption{a) Median solver step versus the tolerated phase error $\epsilon$ for 10 3-SAT instances with $(N,M)\in{(20,91),(50,218)}$. Error bars correspond to 1st and 3rd quartiles. b) 3-SAT cost vs. tolerated error. c) LagONN runtime vs. tolerated error. d) CPU runtime vs. tolerated error.}
\label{error_study}
\end{figure}
LagONN is challenging to simulate at a large scale due to the number of coupled differential equations scaling with the number of clauses. To best select a suitable ODE solver, we studied the stiffness of LagONN's state equations via a custom ODE solver in Matlab, which has an adaptive time step. We chose Fehlberg's method, which has a local error scaling as $O(dt^4)$, requiring three evaluations of $\nabla L_T(\phi)$ per time step, and is a good compromise between speed and accuracy. It consists of a predictor/corrector method that provides a local error estimate at each time step, which is then used to adapt the step size. The predictor phase values $\phi_p$ are first computed according to Heun's method as:
\begin{equation}
    \phi_p[k+1]=\phi[k]+dt[k]\times (f_1+f_2)/2
\end{equation}
where $f_1=\nabla L_T(\phi[k])$ and $f_2=\nabla L_T(\phi[k]+dt[k]\times f_1)$.

The corrector phase values are then calculated according to Simpson's rule as:
\begin{equation}
    \phi[k+1]=\phi[k]+dt[k]\times (f_1+f_2+4f_3)/6
\end{equation}
where $f_3=\nabla L_T(\phi[k]+dt[k]\times (f_1+f_2)/4)$.
We estimate the local error as:
\begin{equation}
    e_r=\sqrt{\frac{(\phi_p-\phi)\times (\phi_p-\phi)^T}{N+M}}
\end{equation}
where $N$ and $M$ are the number of variables and clauses. Given a target error $\epsilon$ in radians per second, the new time step is calculated as:
\begin{equation}
    dt[k+1]=0.9\,dt[k]\,\Gamma
\end{equation}
with $\Gamma=\sqrt{dt[k]\epsilon/e_r}$. If $\Gamma<1$, the solver reiterates with a smaller time step. Otherwise, it integrates the next point with a larger time step. We varied the tolerated error $\epsilon\leq 140^\circ$ using 10 instances per size $N=20$ and $N=50$ from the SATlib library \cite{SATlib}. Each LagONN instance was run 100 times with random initialization. Fig.\ref{error_study}a shows the solver mean step size that exponentially increases with the tolerated error $\epsilon$.

We found that LagONN is robust to numerical errors as it still finds optimal solutions with a similar runtime up to $\epsilon=100^\circ$ (Fig.\ref{error_study}b and c). LagONN's runtime even decreases with numerical error, down to $10\times$ for $N=50$. However, when $\epsilon>100^\circ$, i.e. $<dt>\approx 0.2$, LagONN's runtime significantly increases for $N=50$. Based on this study, we ran our custom solver with a fixed time step $dt=0.15$ for all simulations in the paper. The Matlab code was executed on a Linux server using one CPU per run.

\section{Algorithms} \label{appendix_SA}

\subsection{LagONN}
LagONN's pseudo code for Max-3-SAT is shown next, where we first construct the Lagrange function $L$ based on the input 3-SAT instance, and express its analytical derivatives used later on by the ODE solver for integration. In practice, both gradients of $L$ in the $\phi$ and $\phi_\lambda$ directions are computed by dedicated functions called by the solver "integrate". We used the custom ODE solver previously described in Appendix \ref{appendix_solver}. We also construct a cost function $\kappa$ for the instance, which is called at each iteration to stop the dynamics when the phase assignment is optimal, as described in Appendix \ref{appendix_cost_function}.
\begin{algorithm}[H]
\caption{Simulate LagONN for satisfiable 3-SAT}
\KwIn{3-SAT instance, Number of trials $\texttt{MAX\_TRIALS}$, Simulation time \texttt{MAX\_TIME}, solver time step $dt$}
\KwOut{A phase assignement $\phi$}
Create Lagrange function $L$ for the instance\;
Create gradient functions $\nabla_\phi L$ and $\nabla_{\phi_\lambda}L$\;
Create cost function for the instance $\kappa$\;
$N_{\text{iteration}} \gets \lfloor\texttt{MAX\_TIME}/dt\rfloor$ \;
\For{$\text{trial} \gets 1$ \KwTo $\texttt{MAX\_TRIALS}$}{
    $(\phi,\phi_\lambda) \gets$ random phase\;
    \For{$i \gets 1$ \KwTo $N_{\text{iteration}}$}{
        $(\phi,\phi_\lambda) \gets \text{integrate}(\phi,\phi_\lambda,\nabla_\phi L,\nabla_{\phi_\lambda}L,dt)$\;
       \If{$\kappa (\phi)<0.125$}{
         \Return{$\phi$};
       }
    }
}
\end{algorithm}

\subsection{Simulated Annealing}
In this paper, we execute SASAT, the simulated annealing algorithm proposed in \cite{johnson_1996} for SAT and expressed as follows:

\begin{algorithm}[H]
\caption{SASAT Algorithm \cite{johnson_1996}}
\KwIn{A set of clauses with $N$ variables, \texttt{MAX\_TRIALS}, \texttt{MAX\_TEMP}, and \texttt{MIN\_TEMP}}
\KwOut{A variable assignement $S$}

$trial \gets 1$\;
$decay\_rate\gets 0.2/N$\;
\While{$trial \leq \texttt{MAX\_TRIALS}$}{
    $S \gets$ a random variable assignment\;
    $j \gets 0$\;
    $T \gets \texttt{MAX\_TEMP}$\;
    \While{$T \geq $\texttt{MIN\_TEMP}}{
        \If{$S$ satisfies the clauses}{
            \Return{$S$}\;
        }
        $T \gets \texttt{MAX\_TEMP} \cdot \exp{(-j \cdot decay\_rate)}$\;
        \For{$i \gets 1$ to $N$}{
            Compute the cost change $\delta$ in the number of unsatisfied clauses if $i$ is flipped\;
            Flip $i$ with probability $1/(1 + \exp{(\frac{\delta}{T})})$\;
            $S \gets$ the new assignment if flipped\;
        }
        $j \gets j + 1$\;
        }
    $trial \gets trial + 1$\;
}
\end{algorithm}
The temperature decay rate was adjusted heuristically and scaled with the number of variables as $decay\_rate=0.2/N$ to increase the annealing time and the success probability for larger instances. We have set \texttt{MAX\_TEMP}=1 and \texttt{MIN\_TEMP}=0.01 for all the experiments by inspecting the cost trace for several sizes.

\section{LagONN saddle points and dynamics}\label{appendix_dynamics}
\subsection{Optimal saddle point}
Here we prove Theorem \ref{theorem2} which generalizes Theorem \ref{theorem1} to $M$ clauses.
\originaltheorem*
\begin{proof}
\begin{enumerate}
    \item We motivate the search for a saddle point using the concept of \textit{duality} \cite{boyd_2023}. Consider the following dual function defined as:
\begin{equation}
    D_T(\phi_\lambda)=\min_\phi L_T(\phi,\phi_\lambda)
\end{equation}
For any vector of phases $\phi_\lambda$ it is possible to find an optimal assignment of phase $\phi^*$ such that constraints are satisfied, i.e. for all clauses $Z_m=0$ which gives as optimal value $L_T(\phi^*,\phi_\lambda)=0$. Hence, $D_T(\phi_\lambda)\leq 0$.
The dual problem consists of finding the best lower bound for the optimal value of the initial problem — that is satisfying the constraints $Z_m=0$ for all clauses $C_m$. Hence, we are looking for $\phi_\lambda$ that maximizes $D_T(\phi_\lambda)$ as:
\begin{equation}
    \max_{\phi_\lambda} D_T(\phi_\lambda)=\max_{\phi_\lambda}\min_\phi L_T(\phi,\phi_\lambda)
\end{equation}
For any $\phi$, we can find a vector of phases $\phi_\lambda$ such that their corresponding unitary vectors are orthogonal to their corresponding $\Vec{Z_m}$, hence, $\max_{\phi_\lambda}\min_\phi L_T(\phi,\phi_\lambda)=0$.
Consider now the inverse situation where we first maximize $L_T$ as:
\begin{equation}
    P_T(\phi)=\max_{\phi_\lambda} L_T(\phi,\phi_\lambda)
\end{equation}
For any $\phi$, we can set the phases $\phi_\lambda$ such that their corresponding unitary vectors point in the same direction as their $\Vec{Z_m}$. Hence, $P_T(\phi)\geq 0$. Seeking the best higher bound is expressed as:
\begin{equation}
    \min_{\phi} P_T(\phi)=\min_{\phi}\max_{\phi_\lambda}L_T(\phi,\phi_\lambda)
\end{equation}
For any $\phi_\lambda$, the best higher bound $P_T(\phi)$ is obtained when all the constraints are satisfied, i.e. $Z_m=0$ for all clauses $C_m$. Hence, $\min_{\phi}\max_{\phi_\lambda}L_T(\phi,\phi_\lambda)=0$.
In summary, we obtain:
\begin{equation}
    L(\phi^*,\phi_\lambda^*)=\max_{\phi_\lambda}\min_\phi L_T(\phi,\phi_\lambda)=\min_{\phi}\max_{\phi_\lambda}L_T(\phi,\phi_\lambda)=0
    \label{strong_duality}
\end{equation}
Since $L_T$ is continuous in both $\phi$ and $\phi_\lambda$, and $L(\phi^*,\phi_\lambda^*)$ is attained for the satisfiable assignment of phase $\phi^*\in\{0;\pi\}^N$, $L(\phi^*,\phi_\lambda^*)$ is a saddle point satisfying the inequality $L_T(\phi^*,\phi_\lambda)\leq L_T(\phi^*,\phi_\lambda^*) \leq L_T(\phi,\phi_\lambda^*) $.
\item By contradiction, suppose that $(\phi^*,\phi_\lambda^*)$ is a saddle point of $L_T$ satisfying $L_T(\phi^*,\phi_\lambda)\leq L_T(\phi^*,\phi_\lambda^*) \leq L_T(\phi,\phi_\lambda^*) $, but there is some clause $m$ such that $Z_m\neq 0$. Let us consider the corresponding Lagrange oscillator $\Vec{u_\lambda^m}$: the term $\Vec{u_\lambda^m}.\Vec{Z_m}$ is maximum when $\Vec{u_\lambda^m}$ points in the same direction as $\Vec{Z_m}$. If at the saddle point $\Vec{u_\lambda^m}$ is not already pointing in the same direction as $\Vec{Z_m}$, we can find a new Lagrange phase $\hat{\phi_\lambda}$ for this clause such that $L_T(\phi^*,\hat{\phi_\lambda})>L_T(\phi^*,\phi_\lambda^*)$, which violates the saddle condition $L_T(\phi^*,\phi_\lambda)\leq L_T(\phi^*,\phi_\lambda^*)$ for all $\phi_\lambda$. If at the saddle point, the vector $\Vec{u_\lambda^m}$ with phase $\phi_\lambda^*$ points in the same direction as its respective $\Vec{Z_m}\neq 0$, we can find a new assignment of phase $\hat{\phi}$ satisfying all constraints and $Z_m=0$ such that $L_T(\phi^*,\phi_\lambda^*)>L_T(\hat{\phi},\phi_\lambda^*)$, violating the saddle condition $L_T(\phi^*,\phi_\lambda^*)\leq L_T(\phi,\phi_\lambda^*)$ for any $\phi$.
\end{enumerate}

\end{proof}
\subsection{Proposed dynamics to find a saddle point}
To find an optimal saddle point as described by Theorem \ref{theorem2}, we combine gradient descent and ascent along $\phi$ and $\phi_\lambda$:
\begin{equation}
    \begin{cases}
        \tau \dot{\phi_x}=-\nabla_{\phi_x} L_T\\
        \tau_\lambda\dot{\phi_\lambda}=+\nabla_{\phi_\lambda} L_T
    \end{cases}
\end{equation}
Under these dynamics, the Lagrange function is not a Lyapunov function for the system since with our proposed dynamics $L_T$ can increase with time (gradient ascent along $\phi_\lambda$). Focusing on a single clause $i$, its time derivative is indeed expressed as:
\begin{equation}
    \frac{dL_i}{dt}=\frac{d\Vec{u_\lambda}}{dt}.\Vec{Z_i}+\frac{d\Vec{Z_i}}{dt}.\Vec{u_\lambda}
\end{equation}
and $L_i$'s gradient descent causes $\Vec{Z_i}$ to evolve in the opposite direction from $\Vec{u_\lambda}$ as expressed here for $\tau=1$:
\begin{equation}
\begin{aligned}
    \frac{d\Vec{Z_i}}{dt}.\Vec{u_\lambda}&=\frac{\partial\Vec{Z_i}}{\partial \phi_X}.\Vec{u_\lambda}\frac{d\phi_X}{dt}+\frac{\partial\Vec{Z_i}}{\partial \phi_Y}.\Vec{u_\lambda}\frac{d\phi_Y}{dt}+\frac{\partial\Vec{Z_i}}{\partial \phi_Z}.\Vec{u_\lambda}\frac{d\phi_Z}{dt} \\ 
    &= -\big( \frac{\partial\Vec{Z_i}}{\partial \phi_X}.\Vec{u_\lambda}\big)^2-\big( \frac{\partial\Vec{Z_i}}{\partial \phi_Y}.\Vec{u_\lambda}\big)^2-\big( \frac{\partial\Vec{Z_i}}{\partial \phi_Z}.\Vec{u_\lambda}\big)^2 \\
    &\leq 0
    \end{aligned}
\end{equation}
whereas $L_i$'s gradient ascent tends to bring $\Vec{u_\lambda}$ towards $\Vec{Z_i}$ as:
\begin{equation}
\begin{aligned}
    \frac{d\Vec{u_\lambda}}{dt}.\Vec{Z_i}&=\frac{\partial\Vec{u_\lambda}}{\partial\phi_\lambda}\frac{d\phi_\lambda}{dt}.\Vec{Z_i}\\
    &=(\Vec{Z_i}.\Vec{u_\lambda'})^2 \\
    &\geq 0
    \end{aligned}
\end{equation}

\section{LagONN stability for unsatisfiable problems} \label{appendix_stability}
\begin{figure}[t!]
\centering
\includegraphics[width=0.7\textwidth]{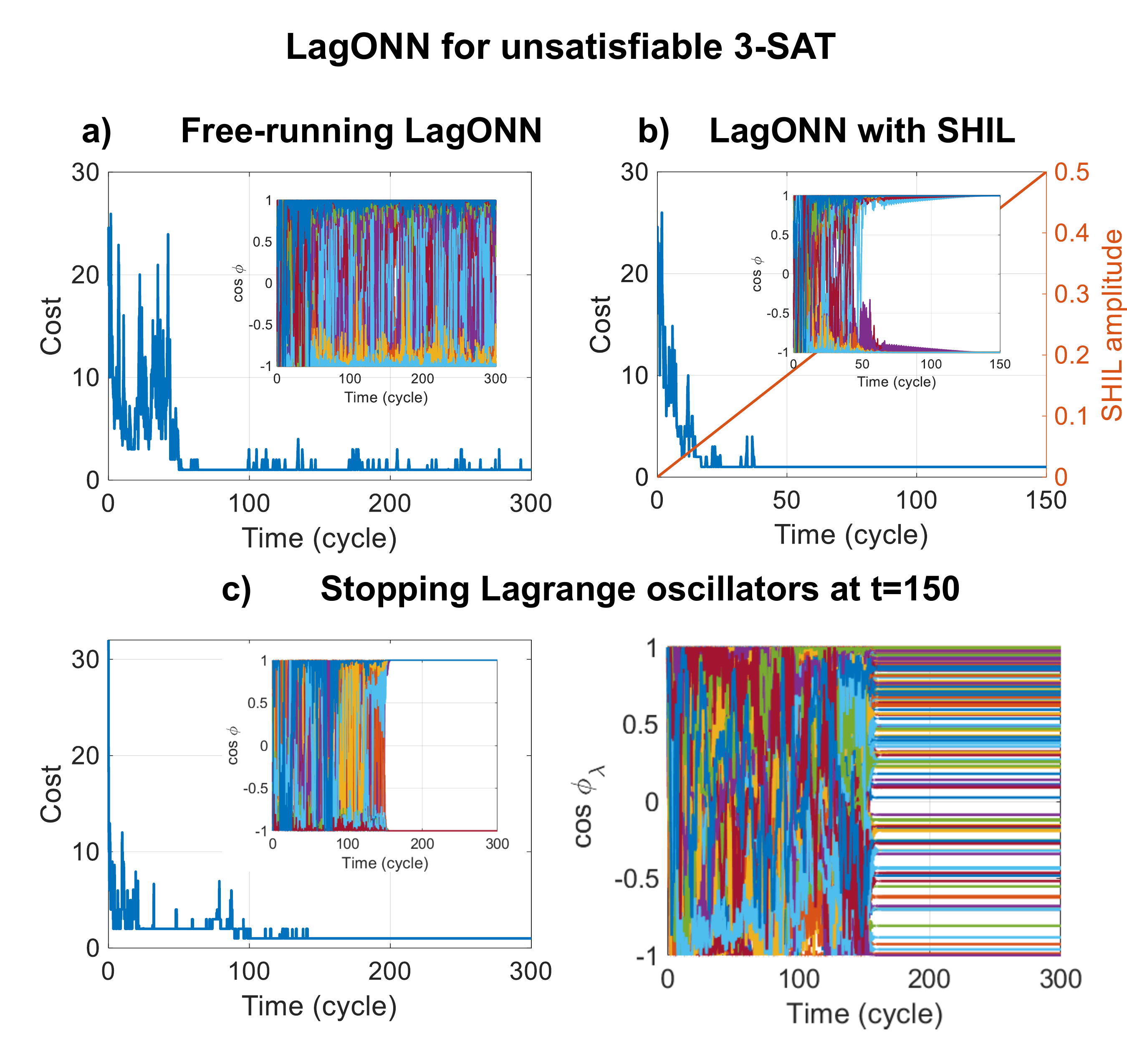}
\caption{LagONN stability study for unsatisfiable 3-SAT instances (50 variables and 218 clauses from SATLib). The insets show the phase evolution for the $N$ SAT variables. a) LagONN running in its nominal mode with all oscillators activated. b) LagONN with second harmonic injection locking (SHIL) to enforce phase binarization for the SAT variables. The SHIL amplitude is slowly ramped up until stabilization, similar to an annealing schedule. c) LagONN simulation with the Lagrange oscillators stopped at t=150 oscillations. The right plot shows the Lagrange phases $\phi_\lambda$ remaining constant after the stop.}
\label{lagoon_stability}
\end{figure}

The stability of LagONN's dynamics is not guaranteed due to the gradient ascent mechanism introduced by the Lagrange oscillators. If the constrained problem is not satisfiable, such as an unsatisfiable 3-SAT formula, the oscillators never settle, as shown in Fig.\ref{lagoon_stability}a for a 50-variable unsatisfiable instance from SATlib. One way of stabilizing the system is to stop the Lagrange oscillator evolution for all clauses $m$ as $d\phi_\lambda^m/dt=0$, for instance by setting the synaptic amplitude to 0 corresponding to a Lagrange time constant $\tau_\lambda\rightarrow +\infty$. In that case, the network has a Lyapunov function $L$ which is the Lagrange function itself. $L$ decreases over time as
\begin{equation}
\begin{aligned}
    \frac{d}{dt} L(\phi, \phi_\lambda) 
=& \sum_j \frac{\partial L}{\partial \phi_j} \frac{d\phi_j}{dt} +\sum_m \frac{\partial L}{\partial \phi_\lambda^m} \frac{d\phi_\lambda^m}{dt}  \\
=& -\tau \sum_j \left( \frac{d\phi_j}{dt} \right)^2 
\le 0
\end{aligned}
\end{equation}
which induces global stability as $L$ is bounded from below.
An example of dynamics when stopping the Lagrange evolution at t=150 oscillations is shown in Fig.\ref{lagoon_stability}c, where the system settles to the optimal Max-3-SAT solution (one remaining unsatisfied clause). Stopping the Lagrange oscillators after some time is different from starting the dynamics without Lagrange oscillators, as the Lagrange phases $\phi_\lambda$ evolve before getting frozen to their final value.

Another option to enforce stability is the use of second harmonic injection locking (SHIL) to phase-lock oscillators to binary values. Since this technique is used in practice to mitigate oscillator frequency variations and recover binary Ising spins \cite{Bashar_2021,Wang_2021}, slowly annealing the injection strength could further freeze phases in the end before read-out, as shown with the simulation in Fig.\ref{lagoon_stability}b where we added a potential function $V_{SHIL}=-K(t)\sum_i^N\cos(2\phi_i)$ to the Lagrange function. In practice, $K(t)\geq 0$ is the increasing amplitude of a harmonic signal injected at $2\omega_0$, with $\omega_0$ the mean oscillator frequency. This injection could help scale up the system to overcome variability and stability issues in physical implementations.

\section{Impact of Lagrange oscillator speed} \label{appendix_lagrange_speed}
Throughout the paper, we assumed that the Lagrange and the standard oscillators are equally fast, i.e. $\tau=\tau_\lambda=1$ in Eq. \ref{LagONN_dynamics}. Here, we study how speeding up or slowing down the Lagrange oscillators affects the runtime of the whole network. In particular, we set $\tau=1$ and vary $\tau_\lambda$ for the Lagrange oscillator and measure the resulting time-to-solution (TTS) (see Eq. \ref{eq:timetosolution}). 

Here we use the first $20$ instances from SATlib with $20$ and $50$ variables with $100$ trials for each instance. For each $\tau_\lambda$, we set a maximum simulation time $t_{max}$. Next, we check whether each trial finds an optimal solution in the predetermined simulation time. Based on these results, we compute the median success probability $\text{p}_\text{s}$ for each $\tau_\lambda$. Combining the maximum simulation time and the success probability, we can quantify the median time to solution for $20$ and $50$ variables for each $\tau_\lambda$ value. The results are summarized in Fig. \ref{tau_runtimes}.
\begin{figure}[t!]
\centering
\includegraphics[width=0.8\textwidth]{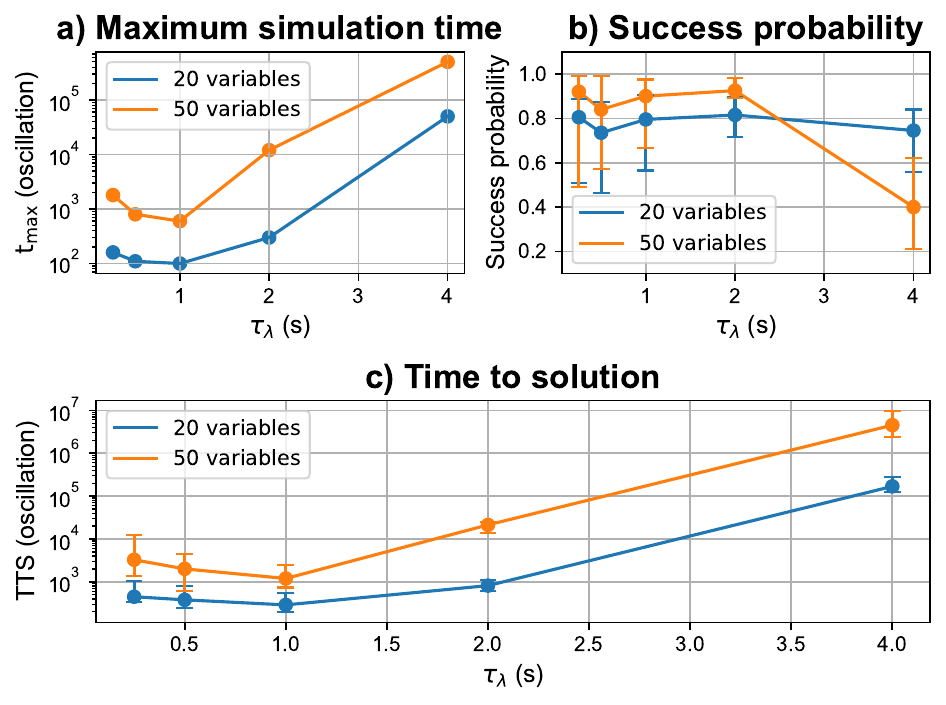}
\caption{Impact of different $\tau_\lambda$ on the time-to-solution of LagONN. Here a simple benchmark on the Max-3-SAT problem with $20$ and $50$ variables for $20$ instances each and $100$ trials per instance. a.) Maximum simulation time as a function of $\tau_\lambda$ for $20$ and $50$ variables. b.) Median success probability $\text{p}_\text{s}$ as a function of $\tau_\lambda$ for $20$ and $50$ variables. c.) Median time to solution as a function of $\tau_\lambda$ for $20$ and $50$ variables.}
\label{tau_runtimes}
\end{figure}

For minimum TTS, there seems to be an optimal value for $\tau_\lambda\approx 1$, which means all oscillators should have the same speed. Although more rigorous analysis is needed, $\tau_\lambda/\tau=1$ may be linked to the saddle geometry of the optimal point (Theorem \ref{theorem2}), where minimization and maximization are interchangeable and maximization does not need to be performed in an outer loop as a slower process.

\section{Discretization of LagONN dynamics} \label{appendix_discretization}

To lay the groundwork for a potential LagONN digital hardware implementation, here we study the effect of phase discretization on the dynamics.
We discretize the phase interval from $0$ to $2\pi$ with a fixed number of states $N_\text{states}$.
For example for $N_\text{states} = 16$ the phases can take on values
\begin{equation}
    \phi = \frac{2\pi}{16}k,
\end{equation}
with $k$ an integer from 0 to $N_\text{states} - 1 = 15$.
We study $N_\text{states} = 16, 32, 64, 128, 256, 512, 1024, 2048, 4096, 8192$ and run LagONN simulations on the first 20 instances from SATLib for 20 variables and 91 clauses, and for 50 variables and 218 clauses.
Each instance is given 100 trials with random initial phases.
We used the same random initial phases per trial, except discretized to the corresponding $N_\text{states}$.
The time-to-solution metric, as defined in Eq. \ref{eq:timetosolution}, is used to compare different discretization levels.
The results are shown in Fig. \ref{fig:discretization_tts}.
For the case of 50 variables with $N_\text{states} = 16$ or $N_\text{states} = 32$, the data points are excluded due to unstable behavior.

\begin{figure}[t!]
    \centering
    \includegraphics[width=0.7\textwidth]{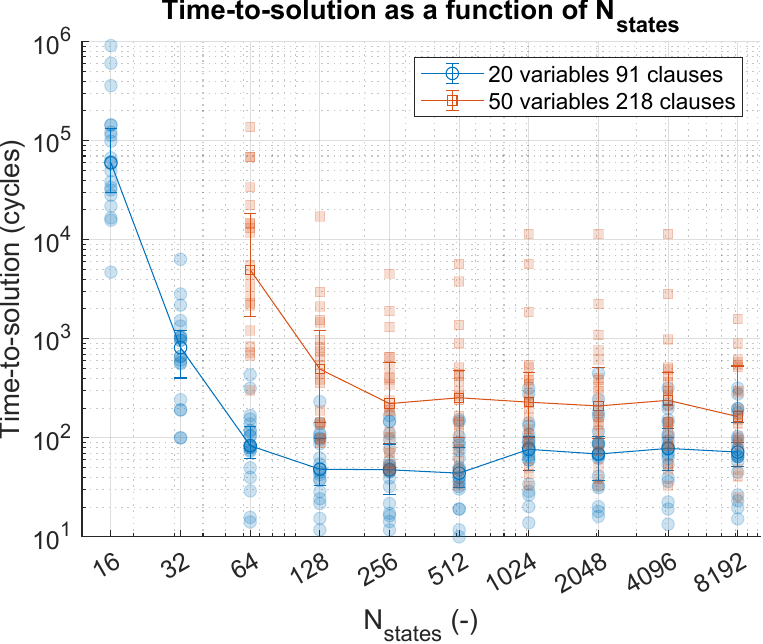}
    \caption{Performance scaling when discretizing the phase dynamics of LagONN to different number of states $N_\text{states}$. The median runtimes are shown with the $25\%$ and $75\%$ quantiles as error bars.}
    \label{fig:discretization_tts}
\end{figure}

As can be seen in Fig. \ref{fig:discretization_tts}, there is a strong increase in time to solution below $N_\text{states} = 64$ for instances with 20 variables, after which the TTS flattens.
For 50 variables, the same can be seen, except that the trend flattens after around $N_\text{states} = 128$.
We can conclude from these results that for a digital implementation of LagONN, one should take care to discretize the phases to a sufficiently high number of states to obtain a stable system.
For 20 variables, this will be around $N_\text{states} \geq 64$, while for 50 variables it is around $N_\text{states} \geq 128$.
Although more data is needed, we hypothesize that a higher number of variables beyond 50 would also require a higher number of $N_\text{states}$.

\end{document}